\newcommand{\ketbra}[1]{ | #1 \rangle\!\langle #1 |}
\newcommand{\Tr} {\operatorname{Tr}}
\newcommand{\one}{\leavevmode\hbox{\small1\normalsize\kern-.33em1}}
\newcommand\id{\leavevmode\hbox{\small1\kern-3.3pt\normalsize1}}
\newcommand*{\proj}[1]{\ket{#1}\!\bra{#1}}
\newcommand{\I}{\mathbb{I}}
\newcommand{\CR}{r_C}
\newcommand{\CN}{n_C}
\newcommand{\NR}{\operatorname{NR}}
\newtheorem{remark}{Remark}
\newtheorem{theorem}{Theorem}
\newtheorem{corollary}{Corollary}
\newtheorem{lemma}{Lemma}
\newtheorem{observation}{Observation}
\newcommand{\change}[1]{#1}
\begin{document}
\title{Certification and Quantification of Multilevel Quantum Coherence}

\author{Martin Ringbauer}
\affiliation{Institute of Photonics and Quantum Sciences, School of Engineering and Physical Sciences, Heriot-Watt University, Edinburgh EH14 4AS, UK}
\affiliation{Centre for Engineered Quantum Systems, School of Mathematics and Physics, University of Queensland, Brisbane, Queensland 4072, Australia.}
\affiliation{Centre for Quantum Computation and Communication Technology, School of Mathematics and Physics, University of Queensland, Brisbane, Queensland 4072, Australia}

\author{Thomas R. Bromley}
\affiliation{Centre for the Mathematics and Theoretical Physics of Quantum Non-Equilibrium Systems (CQNE), School of Mathematical Sciences, The University of Nottingham, University Park, Nottingham NG7 2RD, United Kingdom}

\author{Marco Cianciaruso}
\affiliation{Centre for the Mathematics and Theoretical Physics of Quantum Non-Equilibrium Systems (CQNE), School of Mathematical Sciences, The University of Nottingham, University Park, Nottingham NG7 2RD, United Kingdom}

\author{Ludovico Lami}
\affiliation{Centre for the Mathematics and Theoretical Physics of Quantum Non-Equilibrium Systems (CQNE), School of Mathematical Sciences, The University of Nottingham, University Park, Nottingham NG7 2RD, United Kingdom}

\author{W.~Y.~Sarah Lau}
\affiliation{Centre for Engineered Quantum Systems, School of Mathematics and Physics, University of Queensland, Brisbane, Queensland 4072, Australia.}
\affiliation{Centre for Quantum Computation and Communication Technology, School of Mathematics and Physics, University of Queensland, Brisbane, Queensland 4072, Australia}

\author{Gerardo Adesso}
\email{gerardo.adesso@nottingham.ac.uk}
\affiliation{Centre for the Mathematics and Theoretical Physics of Quantum Non-Equilibrium Systems (CQNE), School of Mathematical Sciences, The University of Nottingham, University Park, Nottingham NG7 2RD, United Kingdom}

\author{Andrew G.~White}
\affiliation{Centre for Engineered Quantum Systems, School of Mathematics and Physics, University of Queensland, Brisbane, Queensland 4072, Australia.}
\affiliation{Centre for Quantum Computation and Communication Technology, School of Mathematics and Physics, University of Queensland, Brisbane, Queensland 4072, Australia}

\author{Alessandro Fedrizzi}
\email{a.fedrizzi@hw.ac.uk}
\affiliation{Institute of Photonics and Quantum Sciences, School of Engineering and Physical Sciences, Heriot-Watt University, Edinburgh EH14 4AS, UK}

\author{Marco Piani}
\email{marco.piani@strath.ac.uk}
\affiliation{SUPA and Department of Physics, University of Strathclyde, Glasgow G4 0NG, UK}

\begin{abstract}
\noindent Quantum coherence, present whenever a quantum system exists in a superposition of \change{multiple  classically distinct states}, marks one of the fundamental departures from classical physics. Quantum coherence has recently been investigated rigorously within a resource-theoretic formalism. However, the finer-grained notion of \emph{multilevel coherence}, which explicitly takes into account the number of superposed classical states, has remained relatively unexplored. A comprehensive analysis of multi-level coherence, which acts as the single-party analogue to multi-partite entanglement, is essential for understanding natural quantum processes as well as for gauging the performance of quantum technologies. Here we develop the theoretical and experimental groundwork for characterizing and quantifying multilevel coherence. \change{We prove that non-trivial levels of purity are required for multilevel coherence, as there is a ball of states around the maximally mixed state that do not exhibit multilevel coherence in any basis. We provide a simple necessary and sufficient analytical criterion to verify the presence of multilevel coherence, which leads to a complete classification of multilevel coherence for three-level systems. We present the robustness of multilevel coherence, a bona fide quantifier which we show to be numerically computable via semidefinite programming and experimentally accessible via multilevel coherence witnesses, which we introduce and characterize.} We further verify and lower-bound the robustness of multilevel coherence by performing a \change{semi-device-independent} phase discrimination task, which is implemented experimentally with four-level quantum probes in a photonic setup. Our results contribute to understanding the operational relevance of genuine multilevel coherence, also by demonstrating the key role it plays in enhanced phase discrimination---a primitive for quantum communication and metrology---and suggest new ways to reliably and effectively test the quantum behaviour of physical systems.
\end{abstract}

\maketitle

\noindent
Quantum coherence manifests whenever a quantum system is in a superposition of classically distinct states, such as different energy levels or spin directions. Formally, a quantum state displays coherence\footnote{For brevity, we will omit the qualifier ``quantum'' in the following.}, whenever it is described by a density matrix that is not diagonal with respect to the relevant orthogonal basis of classical states~\cite{streltsov2016quantum}. In this sense, coherence underpins virtually all quantum phenomena, yet has only recently been characterised formally~\cite{aberg2006quantifying,baumgratz2014quantifying,marvian2016quantify}. Coherence is now recognised as a fully-fledged resource and studied in the general framework of quantum resource theories~\cite{coecke2016mathematical,horodecki2013quantumness,brandao2015reversible,streltsov2016quantum}. This has led to a menagerie of possible ways to quantify coherence in a quantum system~\cite{baumgratz2014quantifying,Napoli2016,Piani2016,yuan2015intrinsic,winter2016operational,biswas2017interferometric,streltsov2015measuring,chitambar2016assisted,adesso2016measures,marvian2016quantum,girolami2014observable,streltsov2016quantum,ren2017quantitative}, along with an intense analysis of how coherence plays a role in fundamental physics, e.g., in quantum thermodynamics \cite{Kammerlander2015,UzdinPRX}, and in operational tasks relevant to quantum technologies, including quantum  algorithms and quantum metrology~\cite{Hillery2015,Napoli2016,Piani2016,marvian2016quantum,giorda2016coherence,zhang2016determining,ren2017quantitative,braun2017}.

Despite a great deal of recent progress, however, the majority of current literature focuses on a rather coarse-grained description of coherence, which is ultimately insufficient to reach a complete understanding of the fundamental role of quantum superposition in the aforementioned tasks. To overcome such limitations, one needs to take into consideration the number of classical states in coherent superposition---contrasted to the simpler question of whether any non-trivial superposition exists---which gives rise to the concept of \emph{multilevel} quantum coherence~\cite{levi2014quantitative,sperling2015convex,killoran2016converting}.
Similarly to the existence of different degrees of entanglement in multi-partite systems, going well beyond the mere presence or absence of entanglement and corresponding to different capabilities in quantum technologies \cite{horodecki2009quantum,guhne2009entanglement}, one can then identify and study a rich structure for multilevel coherence. Deciphering this structure can yield a tangible impact on many areas of physics, such as condensed matter, statistical mechanics and transfer phenomena in many-body systems~\cite{levi2014quantitative,li2012witnessing,Lostaglio2015,Scholes2017}. For example, for understanding the role of coherence in the function of complex biological molecules, such as those found in light harvesting, it will be crucial to differentiate between pairwise coherence among the various sites in the molecule, and genuine multilevel coherence across many sites~\cite{Scholes2017,witt2013stationary,scholak2011efficient,tiersch2012critical}. In quantum computation, large superpositions of computational basis states need to be generated and effective benchmarking of such devices require proper tools to certify and quantify multilevel coherence.

Recent works have presented initial approaches to measuring the amount of multilevel coherence~\cite{chin2017generalized}, as well as schemes to convert it into bipartite and genuine multi-partite entanglement, enabling the fruitful use of entanglement theory tools  to study coherence itself~\cite{regula2017converting,killoran2016converting,chin2017conversion}. Nonetheless, an all-inclusive systematic framework for the characterization, certification, and quantification of multilevel coherence is still lacking.

Here we construct and present such a theoretical framework for multilevel coherence and apply it to the experimental verification and quantification of multilevel coherence in a quantum optical setting. We begin by developing a resource theory of multilevel coherence, in particular providing a new characterisation of the sets of multilevel coherence-free states (see Fig.~\ref{fig:hierarchy}a) and free operations, rigorously unfolding the hierarchy of multilevel coherence. \change{We present analytical criteria for multilevel coherence, which lead to a complete classification of multilevel coherence for three-level systems, and which establish lower bounds on the purity required to exhibit multilevel coherence.} We then formalise the robustness of multilevel coherence and show that it is an efficiently computable measure, which is experimentally accessible through multilevel coherence witnesses. Using photonic four-dimensional systems we demonstrate how to quantify, witness, and bound multilevel coherence experimentally. We prove that multilevel coherence, quantified by our robustness measure, has a natural operational interpretation as a fundamental resource for quantum phase discrimination~\cite{Napoli2016,Piani2016}, a cornerstone task for quantum \change{metrology} and communication technologies~\cite{gottesman1999fault,BarnettCroke}. In turn, we show how to exploit this task to experimentally lower-bound the robustness of multilevel coherence of an unknown quantum state in a \change{semi-device-independent} manner.

Our results yield a significant step forward in the theoretical and experimental quest for the full characterization of (multilevel) coherence as a core feature of quantum systems, and provide a practically useful toolbox for the performance assessment of upcoming quantum technologies exploiting multilevel coherence as a resource.

\section{Results}
\subsection{Resource theory of multilevel coherence}
We generalize the recently formalized resource theory of coherence~\cite{streltsov2016quantum} to the notion of multilevel coherence. We remind the reader that the general structure of a resource theory contains three main ingredients, which we present below: a set of \emph{free states}, which do not contain the resource, a set of \emph{free operations}, which are quantum operations that cannot create the resource, and a \emph{measure} of the resource.

\begin{figure}[t!]
  \begin{center}
  \includegraphics[width=0.8\columnwidth]{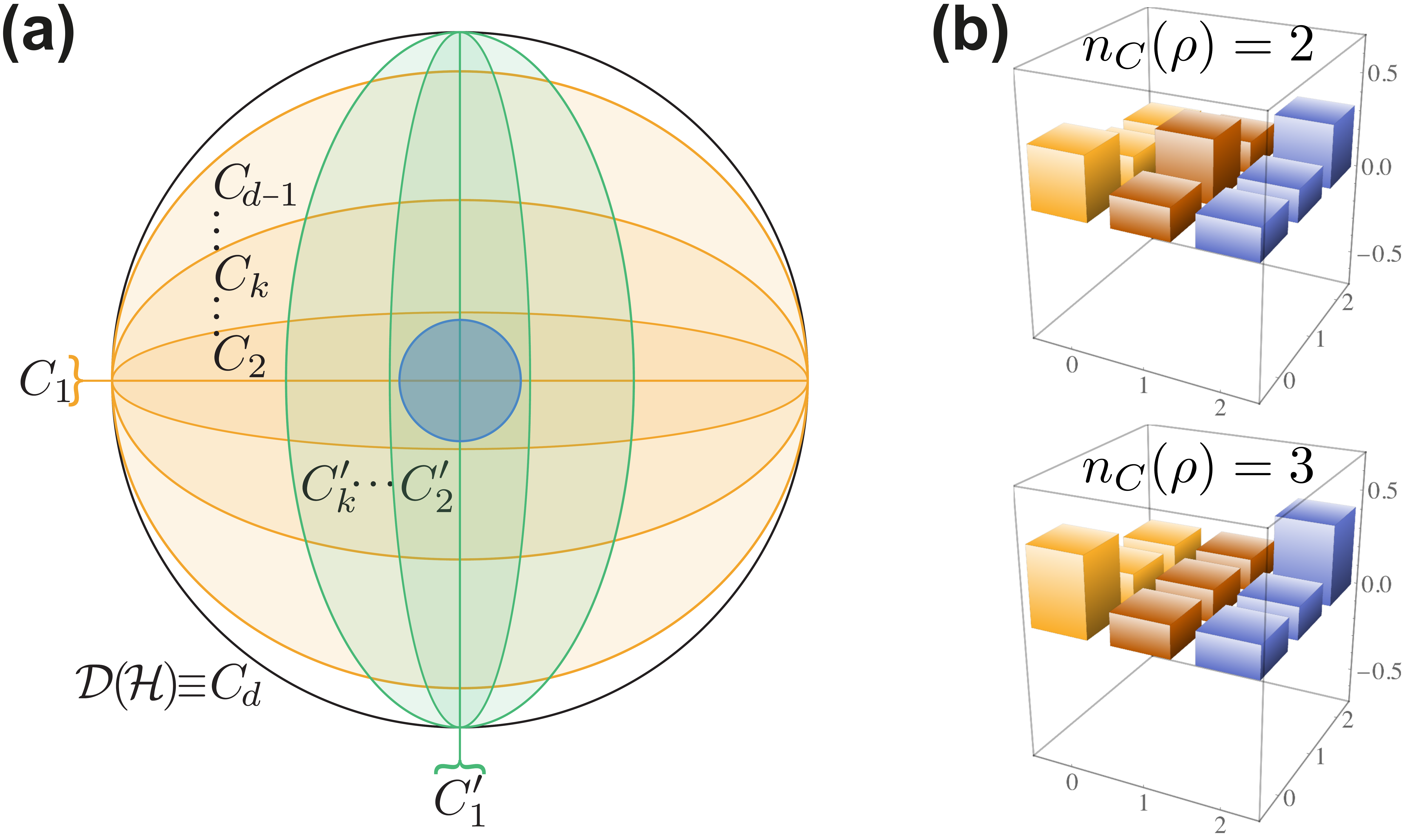}
\end{center}
    \vspace{-2em}
\caption{\textbf{Hierarchy of multilevel coherence.} \textbf{(a)} The set of states $\mathscr{D(H)}$ (outer circle) of a $d$-dimensional quantum system can be structured according to coherence number into the convex sets $C_{k}$ (orange shading) with $C_{1} \subset C_{2} \subset \ldots \subset C_{d}=\mathscr{D(H)}$. Note that a different choice of classical basis leads to a different hierarchy of sets $C_{k}'$ (green shading). However, as shown in the Supplementary Material~\cite{SI}, irrespective of the classical basis, there is a finite volume ball within $C_{2}$ (blue inner circle). This implies that, while almost all states exhibit some form of coherence, achieving genuine multilevel coherence is instead non-trivial and requires the state to be sufficiently far from the maximally mixed state.
\textbf{(b)} Real part of the density matrix of two example three-dimensional quantum states with equal off-diagonal elements, yet different multilevel-coherence properties. The upper state is a mixture of two level-coherent states of the form $\frac{1}{3}(\proj{\psi_{0,1}}+\proj{\psi_{0,2}}+\proj{\psi_{1,2}})$, where $\ket{\psi_{i,j}}=\frac{1}{\sqrt{2}}(\ket{i}+\ket{j})$, and thus has coherence number $n_C(\rho)=2$. \change{The lower state is a mixture of the maximally coherent state $(\ket{0}+\ket{1}+\ket{2})/\sqrt{3}$ with weight $1/2$ and of the incoherent-basis states $\ket{0}$ and $\ket{2}$, each with weight $1/4$. Every pure-state decomposition of the lower state must contain a superposition of all $\ket{0},\ket{1},\ket{2}$, as it can be verified by calculating numerically the robustness of three-level coherence $R_{C_2}\approx 0.0361$ (see Eq.~\ref{Eq:RMC}) or by using the comparison matrix criterion of Section~\ref{sec:conditionsgenuinemultilevel}}. Hence, although both states exhibit the same off-diagonal elements, only the lower state has genuine multilevel coherence, requiring experimental control that is coherent across multiple levels. This exemplifies the fine-grained classification of coherence and of experimental capabilities that studying multilevel coherence provides.}
\label{fig:hierarchy}
\end{figure}

\textbf{Multilevel coherence-free quantum states.}
Consider a $d$-dimensional quantum system with Hilbert space $\mathscr{H}\simeq\mathbb{C}^d$, spanned by an orthonormal basis $\{\ket{i }\}_{i=1}^{d}$, with respect to which we measure quantum coherence. The choice of classical basis is typically fixed to correspond to the eigenstates of a physically relevant observable like the system Hamiltonian. Any pure state $\ket{\psi} \in \mathscr{H}$ can be written in this basis as $\ket{\psi} = \sum_{i=1}^{d} c_{i} \ket{i}$ with $\sum_{i=1}^{d}|c_{i}|^{2}=1$. The state $\ket{\psi}$ exhibits \change{\emph{some}} quantum coherence with respect to the basis $\{\ket{i }\}_{i=1}^{d}$ whenever at least two of the coefficients $c_{i}$ are non-zero~\cite{streltsov2016quantum}. The \change{\emph{multilevel}} nature of coherence is revealed by the number of non-zero coefficients $c_{i}$, the \emph{coherence rank}~$\CR$~\cite{witt2013stationary,levi2014quantitative}. We say that a state $\ket{\psi}$ has coherence rank $\CR(\ket{\psi}) = k$ if exactly $k$ of the coefficients $c_{i}$ are non-zero. The notion of coherence rank thereby provides a \emph{fine-grained} account of the quantum coherence of $\ket{\psi}$, \change{as compared to merely establishing the presence of some coherence}.

To generalise multilevel coherence to mixed states $\rho \in \mathscr{D(H)}$, we define the sets $C_{k} \subseteq \mathscr{D(H)}$ with $k \in \{1,\ldots,d\}$, given by all probabilistic mixtures of pure density operators $\proj{\psi}$ with a coherence rank of at most $k$,
\begin{equation}
C_k := \operatorname{conv}\{ \proj{\psi} : \CR(\ket{\psi})\leq k \text\},
\label{Ck}
\end{equation}
where $\operatorname{conv}$ stands for `convex hull'. $C_{1}$ is the set of fully incoherent states, given by density matrices that are diagonal in the classical basis, while $C_{d} \equiv \mathscr{D(H)}$ is the set of all states. The intermediate sets obey the strict hierarchy, $C_{1} \subset C_{2} \subset \ldots \subset C_{d}$ (see Fig.~\ref{fig:hierarchy}a) and are the free states in the resource theory of multilevel coherence, e.g.\ $C_k$ is the set of $(k+1)$-level coherence-free states.

For a general mixed state one defines the coherence number~$\CN$~\cite{chin2017conversion,chin2017generalized,regula2017converting}, such that a state $\rho \in \mathscr{D(H)}$ has a coherence number $\CN(\rho)=k$ if $\rho \in C_{k}$ and $\rho \notin C_{k-1}$ (for consistency, we set $C_0 = \emptyset$). This parallels the notions of Schmidt number~\cite{terhal2000schmidt} and entanglement depth~\cite{sorensen_entanglement_2001} in entanglement theory. A state with coherence number $\CN(\rho)=k$ can be decomposed into (at most $d^2$) pure states with coherence rank at most $k$, while every such decomposition must contain at least one state with coherence rank at least $k$. A state with $\CN(\rho)=k$ is said to exhibit \emph{genuine} $k$-level coherence, distinguishing it from states that \change{may display coherence between several pairs of levels -- potentially even between all such pairs -- yet can be prepared as mixtures of pure states with relatively lower-level coherence, see Fig.~\ref{fig:hierarchy}b}.
In an experiment, the presence of multilevel coherence proves the ability to coherently manipulate a physical system across many of its levels, much in the same way that the creation of states with large entanglement depth provides a certification of the coherent control over several systems.

Note that a state may, at the same time, display large tout-court coherence, but have vanishing higher-level coherence. This is the case, for example, for a superposition of $d-1$ basis elements, like $\ket{\psi}=\sqrt{1/(d-1)}\sum_{i=2}^d \ket{i}$, which does not display $d$-level coherence despite being highly coherent. On the other hand, a pure state may be arbitrarily close to one of the elements of the incoherent basis, yet display non-zero genuine multilevel coherence for all $k$. This is the case, for example, for the state $\ket{\phi}=\sqrt{1-\epsilon}\ket{1}+\sqrt{\epsilon/(d-1)}\sum_{i=2}^d\ket{i}$ for small $\epsilon$.
It should be clear that the above multi-level classification provides a much finer description of the coherence properties of quantum systems, but that it is also important to elevate such a finer qualitative classification to a finer quantitative description, as we will do in the following, specifically in Section

\textbf{Multilevel coherence-free operations and $k$-decohering operations.}
The second ingredient in the resource theory of multilevel coherence is the set of operations that do not create multilevel coherence. A general quantum operation $\Lambda$ is described by a linear completely-positive and trace-preserving (CPTP) map, whose action on a state $\rho$ can be written as $\Lambda(\rho) = \sum_{i} K_{i}\rho K_{i}^{\dagger}$, in terms of (non-unique) Kraus operators $\{K_{i}\}$ with $\sum_i K_i^\dagger K_i = \I$~\cite{nielsen2010quantum}. For any map $\Lambda$ and any set $S$ of states, we denote $\Lambda(S)\coloneqq\{\Lambda(\rho):\rho\in S\}$. Generalising the formalism introduced for standard coherence~\cite{aberg2006quantifying,baumgratz2014quantifying,streltsov2016quantum}, we refer to a CPTP map $\Lambda$ as a \emph{$k$-coherence preserving} operation if it cannot increase the coherence level, i.e.\ $\Lambda(C_{k})\subseteq C_k$. An important subset of these are the \emph{$k$-incoherent operations}, which are all CPTP maps for which there exists a set of Kraus operators $\{K_i\}$ such that $K_{i}\rho K_{i}^{\dagger}/\mbox{Tr}(K_{i}\rho K_{i}^{\dagger}) \in C_{k}$ for any $\rho \in C_{k}$ and all $i$. Note that the (fully) incoherent operations from the resource theory of coherence correspond to $k=1$. In the Supplementary Material~\cite{SI} we prove that fully incoherent operations are also $k$-incoherent operations for all $k$, and we further define the notion of \emph{$k$-decohering} maps as those that destroy multilevel coherence\change{: an operation $\Lambda$ is $k$-decohering if $\Lambda(\mathscr{D(H)})\subseteq C_k$}. \change{In particular, we introduce a family of maps that generalize the fully decohering map $\Delta[X]=\sum_{i=1}^d \proj{i}X\proj{i}$, which is such that $\Delta(\mathscr{D(H)})\equiv C_1$.}

\textbf{Measure of multilevel coherence.}
The final ingredient for the resource theory of multilevel coherence is a well-defined measure. Very few quantifiers of such a resource have been suggested, and those that exist lack a clear operational interpretation \cite{chin2017conversion,chin2017generalized}. Furthermore, many of the quantifiers of coherence, such as the intuitive $l_{1}$ norm of coherence, which measures the off-diagonal contribution to the density matrix, fail to capture the intricate structure of multilevel coherence, as indicated in Fig.~\ref{fig:hierarchy}b. Here we introduce the \emph{robustness of multilevel coherence} (RMC) $R_{C_k}(\rho)$ as a bona-fide measure that is directly accessible experimentally and efficient to compute for any density matrix. The robustness of $(k{+}1)$-level coherence can be understood as the minimal amount of noise that has to be added to a state to destroy all $(k{+}1)$-level coherence, defined as
\begin{equation}\label{Eq:RMC}
R_{C_k}(\rho) := \inf_{\tau\in \mathscr{D(H)}}\left\lbrace s \geq 0 : \frac{\rho + s \tau}{1+s} \in C_k \right\rbrace .
\end{equation}
This measure generalises the recently introduced robustness of coherence~\cite{Napoli2016,Piani2016} (corresponding to $R_{C_1}(\rho)$) to provide full sensitivity to the various levels of multilevel coherence. As a special case of the general notion of \emph{robustness} of a quantum resource~\cite{vidal1999robustness,steiner2003generalized,piani2015necessary,geller2014quantifying,harrow2003robustness,brandao2005quantifying,brandao2006witnessed,bu2017asymmetry}, the quantities $R_{C_k}$ are known to be valid resource-theoretic measures~\cite{horodecki2009quantum,streltsov2016quantum}, satisfying non-negativity, convexity, and monotonicity on average with respect to stochastic free operations~\cite{vidal1999robustness,steiner2003generalized,Napoli2016,Piani2016,chiribella2016optimal}. The latter means for any $\rho$ that $R_{C_{k}}(\rho) \geq  \sum_{i} p_{i} R_{C_{k}}(\rho_{i})$ for all $k$-incoherent operations with Kraus operators $\{K_{i}\}$ such that $p_{i} = \mbox{Tr}(K_{i}\rho K_{i}^{\dagger})$ and $\rho_{i} = K_{i}\rho K_{i}^{\dagger}/p_{i}$. Since (fully) incoherent operations are $k$-incoherent for any $k$, the RMC also satisfies the \emph{strict monotonicity} requirement for coherence measures~\cite{baumgratz2014quantifying,streltsov2016quantum}, see Supplementary Material~\cite{SI}.

Crucially, we find that the RMC can be posed as the solution of a semidefinite program (SDP) optimization problem~\cite{boyd2004convex,vandenberghe1996semidefinite,watrousSDP}, see Supplementary Material~\cite{SI}. A variety of algorithms exists to solve SDPs efficiently~\cite{vandenberghe1996semidefinite}, meaning that the RMC may be computed efficiently for any $k$---in stark contrast to the robustness of entanglement~\cite{vidal1999robustness,steiner2003generalized} where one has to deal with the subtleties of the characterisation of the set of separable states~\cite{brandao2005quantifying}. For an arbitrary $d$-dimensional quantum state we find that
\begin{equation}
\label{Eq:RMCrange}
0 \leq R_{C_{k}}(\rho) \leq \frac{d}{k}-1 \quad \forall \rho \in \mathscr{D(H)}\, ,
\end{equation}
since any such state can be deterministically prepared using only (fully) incoherent operations~\cite{baumgratz2014quantifying} starting from the maximally coherent state $\ket{\psi^{+}_d}=d^{-1/2} \sum_{i=1}^d \ket{i}$, for which $R_{C_{k}}(\rho) = \frac{d}{k}-1$ (see Supplementary Material~\cite{SI}).

\subsection{Experimental verification and quantification of multilevel coherence}
We apply our theoretic framework to an experiment that produces four-dimensional quantum states with varying degree and level of coherence using the setup in Fig.~\ref{fig:Setup}. We use heralded single photons at a rate of $\sim10^4$Hz, generated via spontaneous parametric down-conversion in a $\beta$-Barium borate crystal, pumped by a femto-second pulsed laser at a wavelength of $410$~nm. We encode quantum information in the polarisation and path degrees of freedom of these photons to prepare $4$-dimensional systems~\cite{Ringbauer2015Epistemic} with the basis states $\ket{0}{=}\ket{H}_1, \ket{1}{=}\ket{V}_1, \ket{2}{=}\ket{H}_2, \ket{3}{=}\ket{V}_2$, where $\ket{p}_m$ denotes a state of polarisation $p$ in mode $m$. This dual-encoding allows for high-precision preparation of arbitrary pure quantum states of any dimension $d\leq4$ with an average fidelity of $\mathcal{F}=0.997\pm0.002$ and purity of $\mathcal{P}=0.995\pm0.003$. An arbitrary mixed state $\rho$ can be engineered as a proper mixture, by preparing the states of a pure-state decomposition of $\rho$ for appropriate fractions of the total measurement time and tracing out the classical information about which preparation was implemented. Using the same technique, we can also subject the input states to arbitrary forms of noise.

\begin{figure}[t!]
  \begin{center}
\includegraphics[width=0.9\columnwidth]{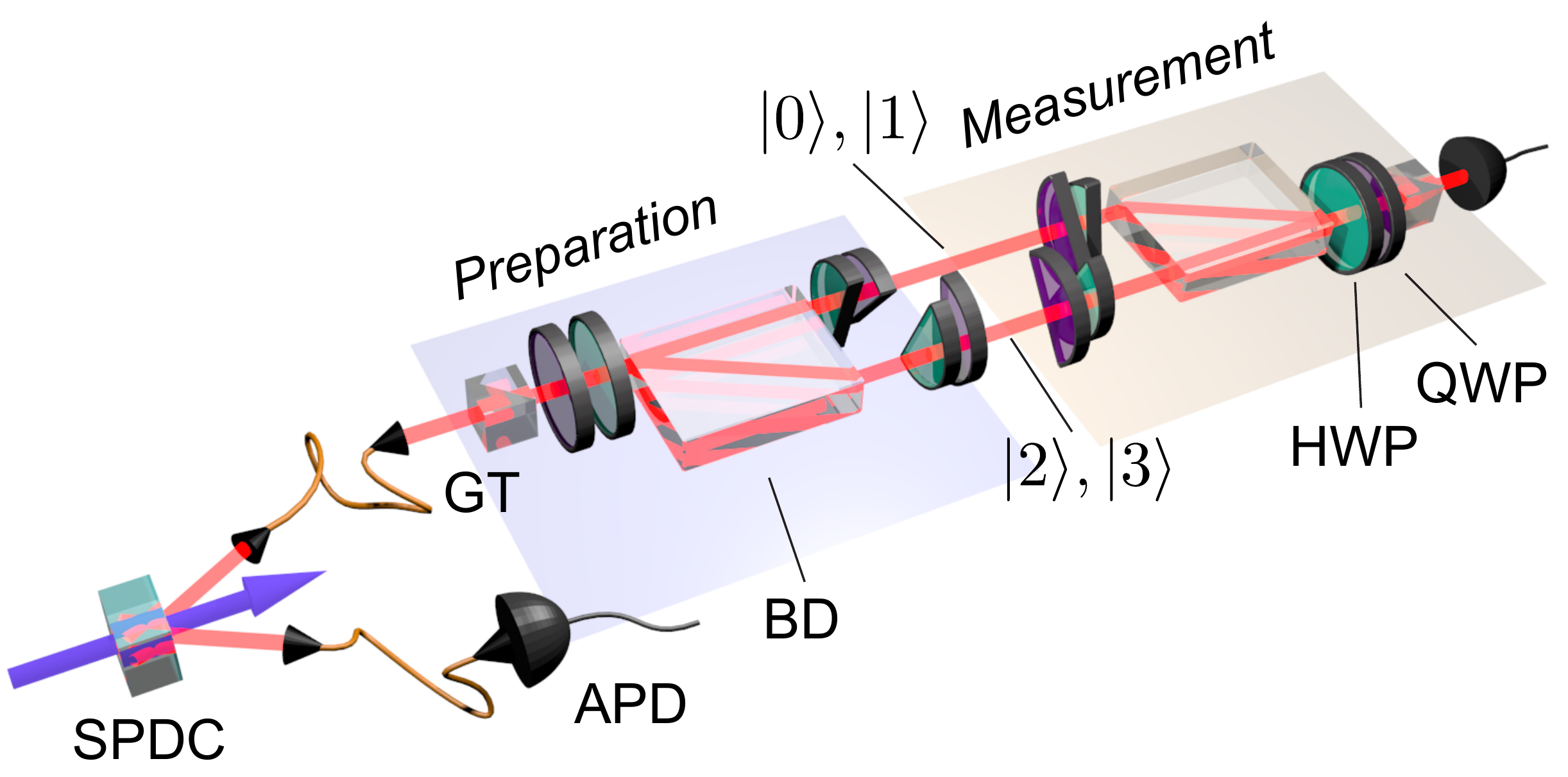}
  \end{center}
    \vspace{-1em}
\caption{\textbf{Experimental setup for probing multilevel coherence} in systems of dimension $d\leq 4$. Pairs of single photons are created via spontaneous parametric down-conversion (SPDC) in a BBO crystal, pumped at a wavelength of $410$~nm. The detection of one photon heralds the presence of the other, which is initialised in a horizontal polarisation state by means of a Glan-Taylor polariser (GT). A four-level quantum system is then prepared using polarisation encoding in each of the two spatial modes created by a calcite beam-displacer (BD). Three sets of half-wave (HWP) and quarter-wave plates (QWP) are used to control the amplitude and phase of the generated states. We prepare noisy maximally coherent states $\rho(p)$, Eq.~\eqref{Eq:NoisyMaximallyCoherent}, for several values of $p$ in dimension $d=4$, as detailed in (b). Arbitrary states can be prepared and measured in dimension $d\leq 4$ by manipulating only the corresponding subspaces.}
  \label{fig:Setup}
\end{figure}

Reversing the preparation stage of the setup allows us to implement arbitrary sharp projective measurements. Arbitrary generalized measurements~\cite{nielsen2010quantum} are correspondingly implemented as proper mixtures of a projective decomposition with an average fidelity of $\mathcal{F}=0.997\pm0.002$. By design, our experiment implements one measurement outcome at a time, which achieves superior precision through the use of a single fibre-coupling assembly~\cite{Ringbauer2015Epistemic}, while reducing systematic bias. The whole experiment is characterized by a quantum process fidelity of $\mathcal{F}_p=0.9956\pm0.0002$, limited by the interferometric contrast of ${\sim}300{:}1$. The latter is stable over the relevant timescales of the experiment due to the inherently stable interferometric design with common mode noise rejection for all but the piezo-driven rotational degrees of freedom of the second beam displacer. All data presented here was integrated over $20$s for each outcome, which is also much faster than the observed laser drifts on the order of hours. The main source of statistical uncertainties thus comes from the Poisson-distributed counting statistics. This has been taken into account through Monte-Carlo resampling with $10^4$ runs for tomographic measurements and $10^5$ runs for all other measurements. All experimental data presented in the figures and text throughout the manuscript are based on at least $10^5$ single photon counts and contain 5$\sigma$-equivalent statistical confidence intervals, which are with high confidence normal distributed unless otherwise stated.

\textbf{\change{Testbed family of states.}}
To illustrate the phenomenology of multilevel coherence, we consider a family of noisy maximally coherent states
\begin{equation}
\label{Eq:NoisyMaximallyCoherent}
\rho(p) = (1-p) \frac{\I}{d} + p \proj{\psi_d^{+}},
\end{equation}
with $p \in [0,1]$ and finite dimension $d$. These states interpolate between the maximally mixed state $\I/d$ (for $p=0$) and the maximally coherent state $\ket{\psi^{+}_d}=d^{-1/2} \sum_{i=1}^d \ket{i}$ (for $p=1$). For this class of states, the RMC can be evaluated analytically to be (see Supplementary Material~\cite{SI})
\begin{equation}
\label{Eq:ROCkNMC}
R_{C_k}(\rho(p)) = \max \left\lbrace \frac{p(d-1)-(k-1)}{k} , 0 \right\rbrace .
\end{equation}
In particular, this implies that $\rho(p)\in C_k$ for $p\leq \frac{k-1}{d-1}$ and $\rho(p) \notin C_k$ for $p> \frac{k-1}{d-1}$, see Supplementary Material~\cite{SI}. The family of noisy maximally coherent states thus provides the ideal testbed for our investigation, spanning the full hierarchy of multilevel coherence, see Fig.~\ref{fig:NMC}. Using the setup of Fig.~\ref{fig:Setup} we engineer noisy maximally coherent states $\rho(p)$ for $d=4$ and a variety of values of $p$. We then reconstruct the experimentally prepared states using maximum likelihood quantum state tomography and compute the robustness coherence for all $k$ by evaluating the corresponding SDP, Eq.~(S13) of the Supplementary Material. As illustrated in Fig.~\ref{Fig:NMCWitness}, this method produces very reliable results, however, it requires $d^2$ measurements and is thus experimentally infeasible already for medium-scale systems. \change{In the following we introduce and use multilevel-coherence witnesses and other techniques to overcome such a limitation.}

\begin{figure}[t!]
  \begin{center}
\includegraphics[width=0.9\columnwidth]{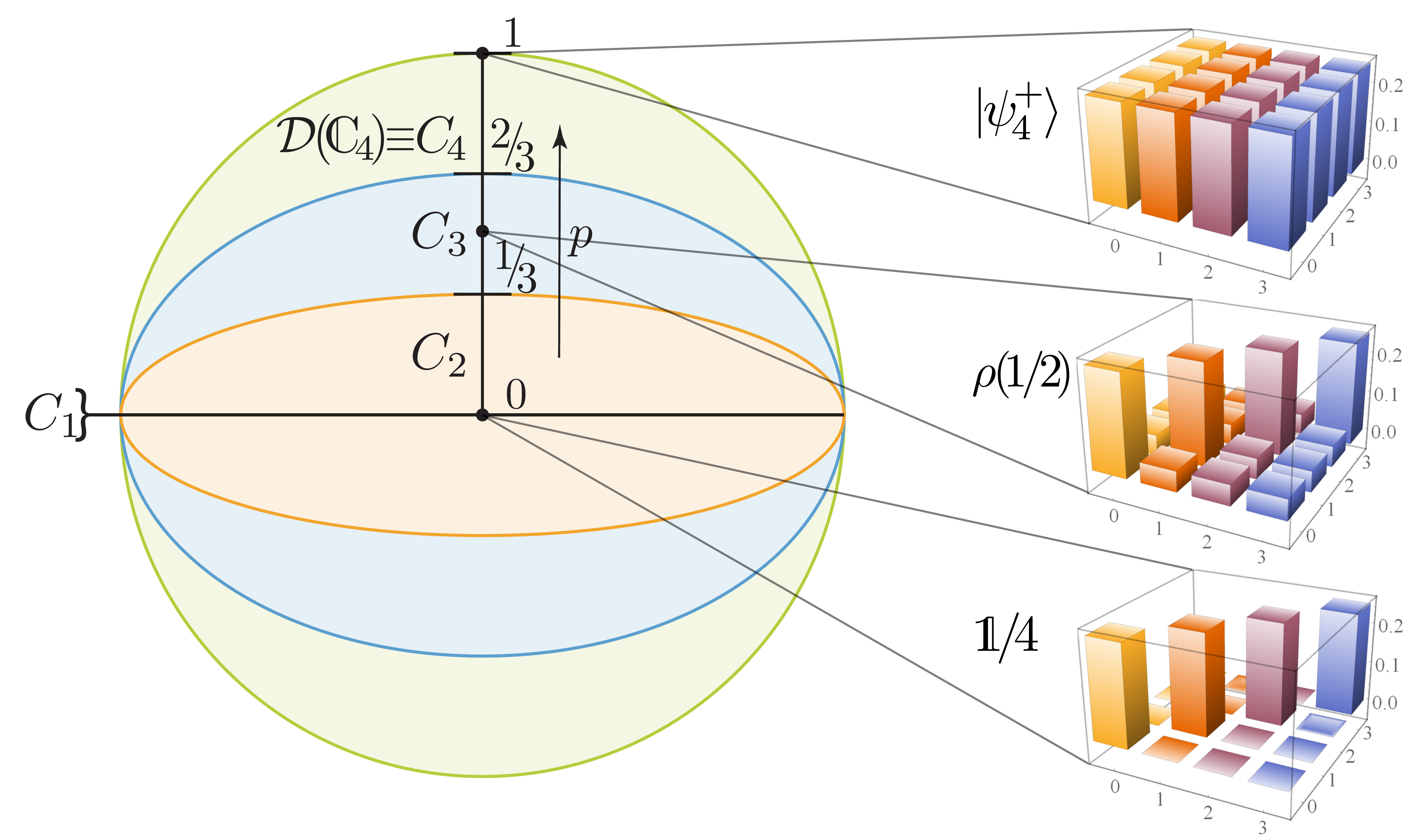}
  \end{center}
    \vspace{-2em}
\caption{\textbf{Multilevel coherence in $4$-dimensional noisy maximally coherent states.} With varying parameter $p \in [0,1]$, the coherence number of a $4$-dimensional noisy maximally coherent state ranges from $\CN(\rho(p)){=}1$ for $p{=}0$, to $\CN(\rho(p)){=}2$ for $p\in ]0,\frac{1}{3}]$ (blue region), $\CN(\rho(p)){=}3$ for $p\in ]\frac{1}{3},\frac{2}{3}]$ (orange region), and $\CN(\rho(p)){=}4$ for $p\in ]\frac{2}{3},1]$ (green region). We use this colour scheme throughout the paper to represent the $3$ non-trivial levels of coherence in a $4$-dimensional system. On the right we show examples of ideal density matrices for $p=1$ (top), $p=1/2$ (middle), and $p=0$ (bottom).}
  \label{fig:NMC}
\end{figure}

\change{
\subsection{Conditions for genuine multilevel coherence}
\label{sec:conditionsgenuinemultilevel}
Given a density matrix $\rho$, it is immediate to decide whether $\CN(\rho)=1$, as, by definition, this happens if and only if $\rho$ is diagonal. While in Section~\ref{sec:witnessing} we show how one can witness any multilevel coherence through the use of tailored multilevel-coherence witnesses, in this section we focus on simple analytical necessary and sufficient criteria for multilevel coherence. Such criteria also allow us to establish that all sets $C_k$, for $k\geq 2$, have non-zero volume within the set of all states.

\textbf{Necessary and sufficient criteria for coherence beyond two levels.} Given a $d\times d$ matrix $A$, the associated \emph{comparison matrix} is defined as~\cite[Definition~2.5.10]{HJ2}
\begin{equation}
M(A)_{ij} = \left\{ \begin{array}{cc} |A_{ii}| & \text{if $i=j$,} \\ -|A_{ij}| & \text{if $i\neq j$ .} \end{array} \right.
\label{comparison matrix}
\end{equation}
We refer the reader to \cite[Section~2.5]{HJ2} for more details on the many properties of this construction. We now present our result on the full characterization of the set $C_2$ in arbitrary dimension, whose proof is given the Supplementary Material~\cite{SI}.

\begin{theorem} \label{thm comparison CN 2}
A density matrix $\rho$ is such that $\CN(\rho)\leq 2$ if and only if $M(\rho)\geq 0$ in the sense of positive semidefiniteness.
\end{theorem}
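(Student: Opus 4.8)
The plan is to establish the two implications separately, in both cases passing through the \emph{pairwise-plus-diagonal} decompositions that characterise $C_2$. On the one hand, any convex combination $\rho=\sum_\alpha p_\alpha\proj{\psi_\alpha}$ with $\CR(\ket{\psi_\alpha})\leq 2$ can be reorganised, by grouping the terms according to $\supp(\ket{\psi_\alpha})$, into the form $\rho=\sum_{i<j}\sigma_{ij}+\sum_i d_i\proj{i}$ where $d_i\geq 0$ and each $\sigma_{ij}\succeq 0$ is supported on $\operatorname{span}\{\ket{i},\ket{j}\}$ with $(\sigma_{ij})_{ij}=\rho_{ij}$. Conversely, any $\rho$ of this form lies in $C_2$, since a positive semidefinite matrix supported on two levels is, via its spectral decomposition, a nonnegative combination of coherence-rank-$\leq 2$ projectors. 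Matching entries, such a decomposition exists if and only if there are numbers $a_{ij},b_{ij}\geq 0$ with $a_{ij}b_{ij}=|\rho_{ij}|^{2}$ (so that the $2\times 2$ block $\left(\begin{smallmatrix} a_{ij} & \rho_{ij}\\ \rho_{ji} & b_{ij}\end{smallmatrix}\right)$ is positive semidefinite of rank one) and $\sum_{j>i}a_{ij}+\sum_{j<i}b_{ji}\leq\rho_{ii}$ for every $i$. Thus the theorem reduces to: such weights exist precisely when $M(\rho)\succeq 0$.

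For the ``only if'' direction I would argue directly on quadratic forms. Since $M(\rho)$ is real symmetric with nonpositive off-diagonal entries, $M(\rho)\succeq 0$ is equivalent to $\bra{v}M(\rho)\ket{v}\geq 0$ for all entrywise nonnegative $v\in\mathbb R^{d}$ (replacing the entries of $v$ by their absolute values can only decrease this quadratic form). If $\rho=\sum_\alpha p_\alpha\proj{\psi_\alpha}$ with $\CR(\ket{\psi_\alpha})\leq 2$, then $\rho_{ii}=\sum_\alpha p_\alpha|\langle i|\psi_\alpha\rangle|^{2}$ while the triangle inequality gives $|\rho_{ij}|\leq\sum_\alpha p_\alpha|\langle i|\psi_\alpha\rangle||\langle j|\psi_\alpha\rangle|$, so for $v\geq 0$ one obtains $\bra{v}M(\rho)\ket{v}\geq\sum_\alpha p_\alpha\bra{v}M(\proj{\psi_\alpha})\ket{v}$. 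Each $M(\proj{\psi_\alpha})$ is positive semidefinite, being supported on a block $\left(\begin{smallmatrix}|c|^{2}&-|c||c'|\\-|c||c'|&|c'|^{2}\end{smallmatrix}\right)$ of rank one with vanishing determinant and nonnegative trace; hence $M(\rho)\succeq 0$. This argument deliberately sidesteps the fact that $A\mapsto M(A)$ is not linear.

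For the ``if'' direction --- which I expect to be the crux --- assume $M(\rho)\succeq 0$. After discarding the levels with $\rho_{ii}=0$ (for which positive semidefiniteness of $M(\rho)$ forces the whole $i$-th row of $\rho$ to vanish), set $R=\operatorname{diag}(\rho_{11},\dots,\rho_{dd})\succ 0$ and let $N$ be the entrywise nonnegative symmetric matrix with $N_{ij}=|\rho_{ij}|/\sqrt{\rho_{ii}\rho_{jj}}$ for $i\neq j$ and $N_{ii}=0$, so that $R^{-1/2}M(\rho)R^{-1/2}=\I-N$; thus $M(\rho)\succeq 0$ is exactly $\lambda_{\max}(N)\leq 1$. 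Taking $z$ to be a Perron--Frobenius eigenvector of $N$ (eigenvalue $\lambda_{\max}(N)$, entrywise nonnegative, strictly positive on each connected component of the coherence graph of $\rho$, which may be treated separately), put $w_i=z_i/\sqrt{\rho_{ii}}$ and, for each pair $\{i,j\}$ with $\rho_{ij}\neq 0$, assign to level $i$ the share $|\rho_{ij}|\,w_j/w_i$ and to level $j$ the share $|\rho_{ij}|\,w_i/w_j$; the two shares multiply to $|\rho_{ij}|^{2}$, and a short computation using $Nz=\lambda_{\max}(N)z$ gives, for each $i$, total incoming share $=\lambda_{\max}(N)\,\rho_{ii}\leq\rho_{ii}$, leaving nonnegative diagonal weight $d_i=(1-\lambda_{\max}(N))\rho_{ii}$. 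This is exactly the data of a pairwise-plus-diagonal decomposition, so $\rho\in C_2$, i.e.\ $\CN(\rho)\leq 2$. (Alternatively, the feasibility of the weight assignment is the dual of a convex feasibility problem whose optimal value equals $\lambda_{\max}(N)$, which avoids invoking positivity of the Perron vector.)

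The main obstacle is therefore the ``if'' direction, and specifically the identification that feasibility of the pairwise splitting is governed precisely by the spectral inequality $\lambda_{\max}(N)\leq 1$ through the Perron eigenvector; the remaining work is bookkeeping --- the vanishing-diagonal levels, the possibly disconnected coherence graph, and verifying that the $\sigma_{ij}$ built from the degenerate $2\times 2$ blocks are rank-one positive semidefinite, hence scalar multiples of coherence-rank-$\leq 2$ projectors. Two byproducts are worth recording along the way: for $d=3$ this criterion, together with the trivial characterisation of $C_1$ as the diagonal states, completes the classification of multilevel coherence; and since $M(\cdot)$ is continuous and $M(\I/d)=\I/d\succ 0$, a whole neighbourhood of the maximally mixed state has $\CN\leq 2$, so $C_2$ has nonzero volume.
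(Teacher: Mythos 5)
Your proof is correct, but it takes a genuinely different route from the paper's on both implications. For necessity ($\CN(\rho)\leq 2\Rightarrow M(\rho)\geq 0$), the paper passes through the set $\Omega$ of unimodular Hermitian phase matrices and a Camion--Hoffman-type equivalence ($M(A)\geq 0$ iff every $\omega\circ A$ is positive semidefinite), using that Hadamard multiplication by $\omega$ acts as a diagonal-unitary congruence on matrices supported on two levels; your argument instead exploits that $M(\rho)$ is symmetric with nonpositive off-diagonal entries, so positivity of its quadratic form need only be checked on the nonnegative orthant, where it is bounded below termwise by $\sum_\alpha p_\alpha \bra{v}M(\proj{\psi_\alpha})\ket{v}$ with each summand manifestly positive semidefinite for coherence rank $\leq 2$. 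This is more elementary and correctly sidesteps the nonlinearity of $A\mapsto M(A)$. For sufficiency, the paper invokes Varga's theorem ($M(A)>0$ iff $DAD$ is strictly diagonally dominant for some diagonal $D>0$), the known fact that diagonally dominant density matrices lie in $C_2$, invariance of $\CN$ under diagonal congruence, and an $\epsilon$-perturbation towards $\I/d$ combined with lower semicontinuity of $\CN$ to handle the boundary of the positive semidefinite cone; you instead build the $C_2$ decomposition explicitly from the Perron eigenvector of the normalized off-diagonal matrix $N$, which is essentially the same mechanism (the diagonal rescaling in Varga's theorem is constructed from a Perron vector) but proved from scratch, yields an explicit pairwise-plus-diagonal decomposition with leftover diagonal weights $(1-\lambda^{\max}(N))\rho_{ii}$, and works directly at the boundary $\lambda^{\max}(N)=1$ without any limiting argument. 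Your bookkeeping on zero diagonal entries and disconnected components of the coherence graph is also sound. The trade-off: the paper's route reuses standard comparison-matrix machinery and produces the $\Omega\circ\rho$ corollary as a byproduct, while yours is self-contained, constructive, and makes the role of the Perron eigenvalue as the exact feasibility threshold transparent.
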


An easy corollary of the above result is a simple rule to completely classify qutrit states according to their coherence number. Namely, a qutrit state $\rho$ has coherence number at most $2$ if $\det M(\rho)\geq 0$, and $3$ otherwise~\cite{SI}.}

\change{
\textbf{Necessary conditions for multilevel coherence.}
As indicated in Fig.~\ref{fig:hierarchy}a, the set of fully incoherent states $C_{1}$ has zero volume within $\mathscr{D(H)}$~\cite{Piani2016}. This has the important consequence that a  state  generated randomly will practically never be fully incoherent, and that arbitrarily small perturbations applied to a fully incoherent state will create coherence~\cite{ferraro2010almost}. In other words, under realistic experimental conditions one cannot prepare or verify a fully incoherent state. In contrast, we show in the Supplementary Material~\cite{SI} that the sets $C_{k}$ are always of non-zero volume for any $k\geq 2$ and thus present a rich, and experimentally meaningful hierarchy within $\mathscr{D(H)}$, as shown in Fig.~\ref{fig:hierarchy}a.

Specifically, we have that, if a state $\rho$ satisfies
\begin{equation}
\rho \geq \frac{d-k}{d-1} \Delta(\rho),
\label{eq:sufficient for C_k}
\end{equation}
with $\Delta$ the fully decohering map, then $\rho\in C_k$. Furthermore, a corollary of Theorem \ref{thm comparison CN 2} is that if a state $\rho$ satisfies
\begin{equation}
\Tr(\rho^2)\leq\frac{1}{d-1},
\label{Eq:purity multilevel}
\end{equation}
then such a state cannot have multilevel coherence, i.e. $\rho\in C_2$ for any reference basis. Observe that the condition \eqref{Eq:purity multilevel} is equivalent to being close enough to the maximally mixed state $\I/d$~\cite{SI}, and that the upper bound in Eq.~\eqref{Eq:purity multilevel} is tight,
as it is achieved by states at the boundary of the set of density matrices, e.g., by $\rho=(\I-\proj{\psi})/(d-1)$, with $\ket{\psi}$ any arbitrary pure state~\cite{SI}.
This corollary can be considered the correspondent in coherence theory of the celebrated fact, in entanglement theory, that there is a ball of (fully) separable states surrounding the maximally mixed state~\cite{zyczkowski_volume_1998,gurvits_largest_2002,gurvits_separable_2003}.

}

\subsection{Witnessing multilevel coherence}
\label{sec:witnessing}
In analogy with the parallel concept for quantum entanglement, we introduce an efficient alternative to the tomographic approach: \emph{multilevel coherence witnesses}. \change{In the following we will denote by $\lambda^{\min}(X)$ and $\lambda^{\max}(X)$ the smallest and largest eigenvalues of a Hermitian operator/matrix $X=X^\dagger$, respectively.}

Since the sets $C_k$ are convex, for any $\rho \notin C_{k}$ there exists a $(k{+}1)$-level coherence witness $W$ such that $\mbox{Tr}(W \rho) < 0$ and $\mbox{Tr}(W \sigma) \geq 0$ for all $\sigma\in C_{k}$~\cite{rudin1991functional}. A negative expectation value for $W$ thus certifies the $(k+1)$-level coherence of $\rho$ in a single measurement.

Given any pure state $\ket{\psi} = \sum_{i=1}^{d}c_{i} \ket{i} \in \mathscr{H}$, one can construct a $(k{+}1)$-level coherence witness as
\begin{equation}
\label{eq:maxcohwitness}
W_k(\psi) = \I - \frac{1}{\sum_{i=1}^k |c_i^{\downarrow}|^2}\proj{\psi} ,
\end{equation}
where $c_{i}^{\downarrow}$ are the coefficients $c_{i}$ rearranged into non-increasing modulus order. \change{This construction ensures that $\bra{\phi}W\ket{\phi} \geq 0$ for all $\ket{\phi}$ with $\CR(\phi)\leq k$, since $\max_{\CR(\phi)\leq k} |\braket{\phi|\psi}|^2 = \sum_{i=1}^k |c_i^{\downarrow}|^2$}~\cite{SI}.  On the other hand, it is clear that $W_k(\psi)$ always reveals the $k{+}1$ coherence of $\ket{\psi}$ if present, since $\braket{\psi|W_k(\psi)|\psi} = 1- \left(\sum_{i=1}^{k}|c_{i}^{\downarrow}|^{2}\right)^{-1}$, which is negative if $\ketbra{\psi} \notin C_{k}$. For the maximally coherent state $\ket{\psi_d^+}$, we then find $W_k(\psi_d^+) = \I - \frac{d}{k} \proj{\psi_d^+}$.

\begin{figure}[t!]
  \begin{center}
\includegraphics[width=0.9\columnwidth]{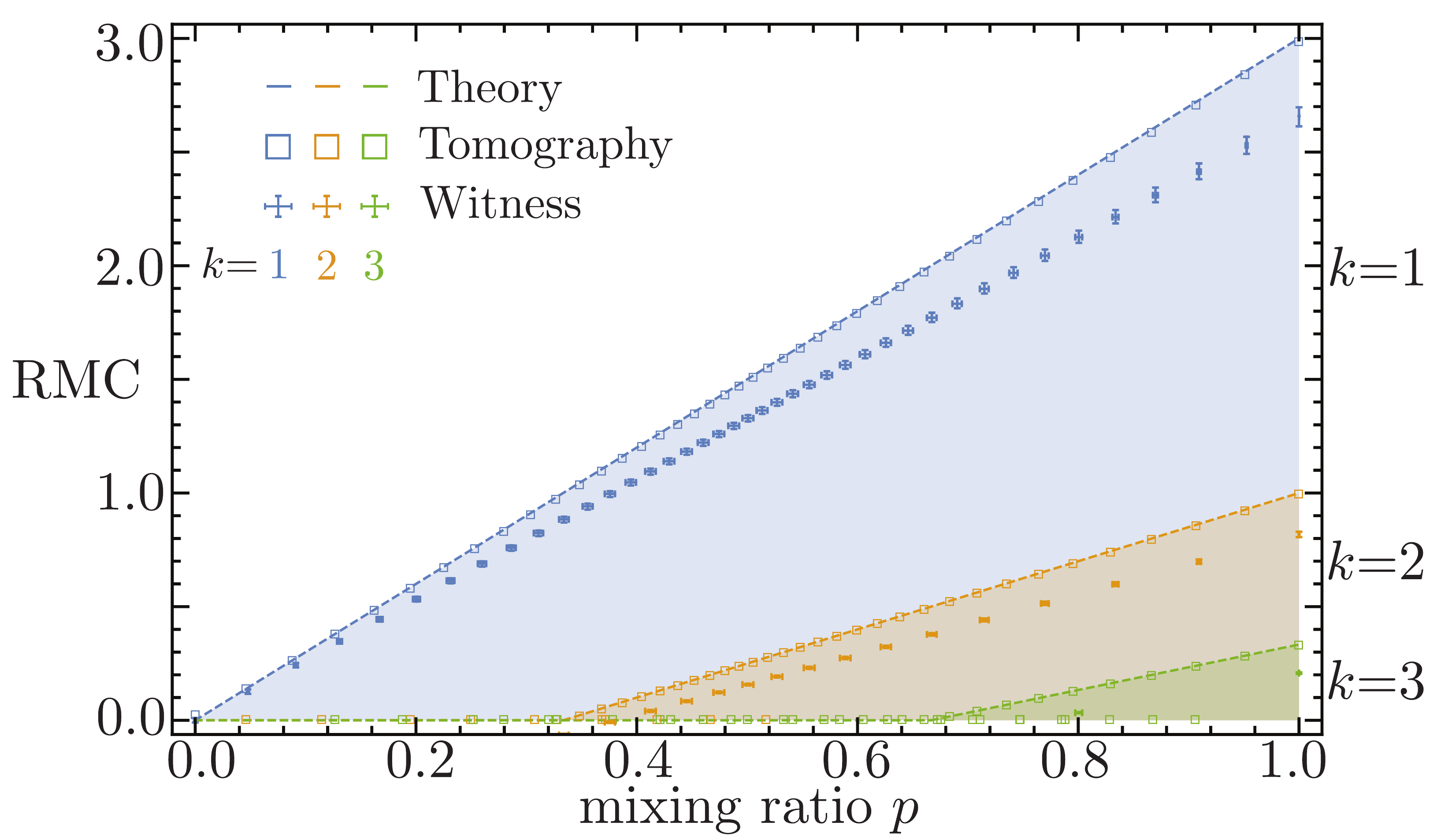}
  \end{center}
    \vspace{-2em}
\caption{\textbf{Measuring multilevel coherence.} The plot shows experimentally measured robustness of $(k+1)$-level coherence for a $4$-dimensional noisy maximally coherent state $\rho(p)$ as a function of $p \in [0,1]$. The solid lines represent the theory predictions from Eq.~\eqref{Eq:ROCkNMC} and the shaded areas indicate the regions where multilevel coherence for $k=1$ (blue), $k=2$ (orange), $k=3$ (green) can be observed. The open squares correspond the robustness of $(k+1)$-level coherence estimated from SDP in Eq.~\eqref{Eq:RobustnessDual} applied to the experimentally reconstructed density matrices. The 5$\sigma$ statistical confidence regions obtained from Monte-Carlo resampling are on the order of $10^{-3}$ for $p$ and on the order of $10^{-2}$ for the RMC. These are smaller than the symbol size and thus not shown. The data points with error bars correspond to the absolute values of the negative expectation values of $W_k(\psi_4^+)$ in Eq.~\eqref{eq:maxcohwitness}, which provide a lower bound on the RMC.}
  \label{Fig:NMCWitness}
\end{figure}

More generally, the set $C^*_k$ of $(k{+}1)$-level coherence witnesses is obtained as the dual of the set $C_k$ and is characterised by the following theorem, proved in the Supplementary Material~\cite{SI}.
\begin{theorem}\label{Theorem:WCharacterisation}
A self-adjoint operator $W$ is in $C_{k}^{*}$ if and only if
\begin{equation}\label{Eq:WitnessCharacterisation}
P_I W P_I \geq 0 \quad\forall I\in\mathcal{P}_k,
\end{equation}
where $\mathcal{P}_{k}$ is the set of all the $k$-element subsets of $\{1, 2, \ldots, d\}$, and $P_{I} \coloneqq \sum_{i \in I} \ket{i}\bra{i}$.
\end{theorem}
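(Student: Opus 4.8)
The plan is to reduce membership in the dual cone $C_k^\ast$ to a pointwise positivity condition on pure states of bounded coherence rank, and then to identify those pure states with the vectors supported on $k$-element index sets. \textbf{Step 1 (reduction to pure states).} By definition $W\in C_k^\ast$ means $\Tr(W\sigma)\geq 0$ for every $\sigma\in C_k$. Since $C_k=\operatorname{conv}\{\proj{\psi}:\CR(\ket{\psi})\leq k\}$ and $\sigma\mapsto\Tr(W\sigma)$ is linear, this holds if and only if $\bra{\psi}W\ket{\psi}\geq 0$ for every $\ket{\psi}$ with $\CR(\ket{\psi})\leq k$: one direction is immediate since each such $\proj{\psi}$ lies in $C_k$, and the converse follows because an arbitrary $\sigma\in C_k$ is a convex combination of such projectors, so the inequality is inherited (compactness of $C_k$ would also let one attain the minimum of $\Tr(W\sigma)$ at an extreme point, but the convex-combination argument already suffices). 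Hence $W\in C_k^\ast$ iff $W$ is non-negative on the set of all vectors of coherence rank at most $k$.

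\textbf{Step 2 (coherence-rank vectors as supports).} Next I would note that a vector $\ket{\psi}=\sum_{i=1}^d c_i\ket{i}$ has $\CR(\ket{\psi})\leq k$ precisely when $\{i:c_i\neq 0\}$ is contained in some $I\in\mathcal{P}_k$, i.e. precisely when $\ket{\psi}\in\operatorname{range}(P_I)$ for some $I\in\mathcal{P}_k$; conversely every vector in $\operatorname{range}(P_I)$ has coherence rank at most $|I|=k$. Therefore the condition of Step~1 is equivalent to: for every $I\in\mathcal{P}_k$, $\bra{\phi}W\ket{\phi}\geq 0$ for all $\ket{\phi}\in\operatorname{range}(P_I)$. \textbf{Step 3 (from restricted forms to $P_I W P_I$).} Finally I would translate ``$W$ non-negative on $\operatorname{range}(P_I)$'' into ``$P_I W P_I\geq 0$ on all of $\mathscr{H}$'': for arbitrary $\ket{\chi}\in\mathscr{H}$ one has $\bra{\chi}P_I W P_I\ket{\chi}=\bra{\phi}W\ket{\phi}$ with $\ket{\phi}:=P_I\ket{\chi}\in\operatorname{range}(P_I)$, and since $P_I$ acts as the identity on its range every such $\ket{\phi}$ arises in this way, so positive semidefiniteness of the $d\times d$ matrix $P_I W P_I$ is exactly non-negativity of the restricted quadratic form. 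Stringing the three steps together, quantified over all $I\in\mathcal{P}_k$, yields the claimed equivalence.

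The whole argument is essentially bookkeeping, and I do not expect a genuine obstacle; the only point deserving a moment's care is Step~3, where one must keep in mind that $P_I W P_I$ annihilates the orthogonal complement of $\operatorname{span}\{\ket{i}:i\in I\}$, so that its positivity as an operator on the full space encodes nothing beyond the positivity of $W$ restricted to that $k$-dimensional subspace. A secondary thing worth stating carefully is the equivalence ``$\Tr(W\sigma)\ge0$ on all of $C_k$'' $\Leftrightarrow$ ``$\bra{\psi}W\ket{\psi}\ge0$ on the generating projectors'' in Step~1, which is where convexity of $C_k$ is used.
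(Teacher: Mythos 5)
Your argument is correct and matches the paper's own proof: both reduce dual-cone membership to non-negativity of $\bra{\psi}W\ket{\psi}$ on pure states of coherence rank at most $k$ via convexity, and then identify those states with vectors in the range of some $P_I$, so that the condition becomes $P_I W P_I\geq 0$ for all $I\in\mathcal{P}_k$. You have simply spelled out the bookkeeping in more detail than the paper does.
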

\noindent Hence, verifying that a given self-adjoint operator $W$ is a $(k{+}1)$-level coherence witness requires verifying the positive semidefiniteness of all $(k{\times}k)$-dimensional principal sub-matrices of the matrix representation of $W$ with respect to the classical basis.

\change{We observe that, while non-trivial multilevel-coherence witnesses necessarily have negative eigenvalues, the number of such negative eigenvalues is severely constrained~\cite{SI}. In particular, we have
\begin{observation}	
	\label{obs:negeig}
	A $(k+1)$-level coherence witness $W_k\in C^\star_k$ has at most $d-k$ negative eigenvalues. All the eigenvalues are bounded from below by $-\frac{d-k}{k}\lambda^{\max}(W_k)$.
\end{observation}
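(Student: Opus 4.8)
The plan is to rely only on the structural description of $C^\star_k$ from Theorem~\ref{Theorem:WCharacterisation}, namely that a self-adjoint $W$ belongs to $C^\star_k$ precisely when $P_I W P_I\geq 0$ for every $I\in\mathcal{P}_k$. Write $\Lambda:=\lambda^{\max}(W_k)$. If $W_k=0$ both statements are vacuous; otherwise $\Lambda>0$, since every diagonal entry of $W_k$ is a diagonal entry of a positive-semidefinite principal block, hence non-negative, and if all of them vanished while $W_k\leq 0$ then $W_k=0$, a case already excluded. Rescaling, I may assume $\Lambda=1$, so that $B:=\I-W_k\geq 0$ and $P_I B P_I\leq P_I$ for all $I\in\mathcal{P}_k$; proving $\lambda^{\max}(B)\leq d/k$ is then equivalent to the asserted spectral bound $\lambda^{\min}(W_k)\geq 1-d/k=-\tfrac{d-k}{k}\Lambda$.

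For the count of negative eigenvalues I would argue by a dimension count. If $W_k$ had at least $d-k+1$ strictly negative eigenvalues, its negative eigenspace $V_-$ would satisfy $\dim V_-\geq d-k+1$, so for any $I\in\mathcal{P}_k$ we would have $\dim V_-+\dim(\operatorname{ran}P_I)\geq(d-k+1)+k>d$ and could choose a non-zero $\ket{v}\in V_-\cap\operatorname{ran}P_I$. Then $\bra{v}W_k\ket{v}<0$ because $\ket{v}\in V_-$, whereas $\ket{v}=P_I\ket{v}$ forces $\bra{v}W_k\ket{v}=\bra{v}P_I W_k P_I\ket{v}\geq 0$, a contradiction; hence $W_k$ has at most $d-k$ negative eigenvalues.

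For the spectral bound the key device is a tight frame of coordinate $k$-subsets. Take the $d$ cyclic intervals $I_j=\{j,j+1,\dots,j+k-1\}$ (indices modulo $d$); each basis projector $\proj{i}$ is covered by exactly $k$ of them, so $\sum_{j=1}^{d}P_{I_j}=k\,\I$ and therefore $B=k^{-2}\big(\sum_j P_{I_j}\big)B\big(\sum_{j'}P_{I_{j'}}\big)$. Fix a unit vector $\ket{g}$, set $a_j:=\|P_{I_j}\ket{g}\|$, and form the $d\times d$ matrix $G$ with entries $G_{jj'}:=\bra{g}P_{I_j}BP_{I_{j'}}\ket{g}$, which is the Gram matrix of the vectors $B^{1/2}P_{I_j}\ket{g}$ and hence positive semidefinite. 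The frame identity gives $\sum_j a_j^2=\bra{g}\big(\sum_j P_{I_j}\big)\ket{g}=k$, while the constraint $P_{I_j}BP_{I_j}\leq P_{I_j}$ gives $G_{jj}=\bra{g}P_{I_j}BP_{I_j}\ket{g}\leq a_j^2$. Using $|G_{jj'}|\leq\sqrt{G_{jj}G_{j'j'}}$ for the positive-semidefinite $G$ together with Cauchy--Schwarz over the $d$ indices, one then obtains $\bra{g}B\ket{g}=k^{-2}\sum_{j,j'}G_{jj'}\leq k^{-2}\big(\sum_j\sqrt{G_{jj}}\big)^2\leq k^{-2}\big(\sum_j a_j\big)^2\leq k^{-2}\,d\sum_j a_j^2=d/k$. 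As $\ket{g}$ was arbitrary, $\lambda^{\max}(B)\leq d/k$, as required.

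I expect the sharp constant to be the only genuine obstacle. Averaging the constraints $P_I W_k P_I\geq 0$ over all $I\in\mathcal{P}_k$ — equivalently, invoking $\Phi^{\dagger}(W_k)\geq 0$ for a single $k$-decohering ``twirl'' $\Phi$ — discards the positivity of the diagonal blocks and yields only the weaker bound $\lambda^{\min}(W_k)\geq-\tfrac{d-k}{k-1}\Lambda$. The tight value $-\tfrac{d-k}{k}\Lambda$, saturated by $W_k(\psi_d^+)=\I-\tfrac{d}{k}\proj{\psi_d^+}$, is exactly what the quadratic ``sandwich'' $B=k^{-2}\big(\sum_j P_{I_j}\big)B\big(\sum_{j'}P_{I_{j'}}\big)$ provides: the estimates above combine, through $\sum_j P_{I_j}=k\I$ and positivity of the Gram matrix $G$, to exactly $d/k$.
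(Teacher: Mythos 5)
Your proof is correct, and both halves take routes that differ from the paper's. For the count of negative eigenvalues, the paper invokes the Cauchy interlacing theorem applied to $P_I W_k P_I\geq 0$; your dimension-count argument (a $(d-k+1)$-dimensional negative eigenspace must intersect the $k$-dimensional range of $P_I$, producing a vector on which $W_k$ is simultaneously negative and non-negative) is just the elementary proof of the relevant half of interlacing, so the two are essentially equivalent, with yours being self-contained. For the spectral bound the divergence is genuine: the paper evaluates the witness on the eigenvector $\ket{w^{\min}}$ of the smallest eigenvalue and uses the identity $\max_{\CR(\ket{\phi})\leq k}|\braket{\phi|w^{\min}}|^2=\sum_{i=1}^k|c_i^\downarrow|^2\geq k/d$ (already established for the pure-state witness construction), which yields $0\leq \lambda^{\min}(W_k)\tfrac{k}{d}+\lambda^{\max}(W_k)\tfrac{d-k}{d}$ in two lines. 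You instead normalize to $B=\I-W_k\geq 0$, use the tight-frame identity $\sum_{j=1}^{d}P_{I_j}=k\,\I$ over cyclic $k$-windows, and control $\braket{g|B|g}=k^{-2}\mathbf{1}^{T}G\mathbf{1}$ via positivity of the Gram matrix $G$ together with $G_{jj}\leq\|P_{I_j}\ket{g}\|^2$ and Cauchy--Schwarz; all steps check out and the constant $d/k$ comes out sharp. The paper's argument is shorter because it reuses an existing lemma; yours buys independence from that lemma and would survive in settings where one only has access to a covering family of rank-$k$ projectors forming a tight frame, rather than the full set $\mathcal{P}_k$. Your side remark that averaging the constraints via the $k$-dephasing map only gives the weaker constant $-\tfrac{d-k}{k-1}\lambda^{\max}(W_k)$ is also accurate and correctly motivates the finer estimate.
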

\noindent It is worth remarking that the eigenvector corresponding to the single negative eigenvalue of a $d$-level-coherence witness (that is, $k=d-1$) \emph{must} exhibit itself $d$-level coherence.}

\change{The characterization of multilevel-coherence witnesses of Theorem \ref{Theorem:WCharacterisation} finds explicit application in the dual form of the SDP formulation of the RMC~\cite{boyd2004convex}.} In the case of RMC strong duality holds, which means that the primal and dual forms of the problem are equivalent, with the latter given by
\begin{equation}\label{Eq:RobustnessDual}
\begin{array}{lll}
R_{{C}_k} (\rho) \,\,\, = \,\,\, & \max  \qquad & - \Tr(\rho W) \\[6pt]
& \textup{s.t.} & P_{I} W P_{I} \geq 0 \quad \forall I\in\mathcal{P}_k\\[6pt]
&  & W \leq \mathbb{I} \, . \\
\end{array}
\end{equation}
Hence, while a lower bound on $R_{{C}_k} (\rho)$ can be obtained from the negative expectation value of \emph{any} observable $W \in C_{k}^{\star}$ such that $W \leq \mathbb{I}$, \change{the dual SDP for the RMC actually computes an optimal $(k+1)$-level coherence witness whose expectation value matches $R_{{C}_k} (\rho)$.}

For the family of noisy maximally coherent state $\rho(p)$, the witness $W_k(\psi_d^+)$ of Eq.~\eqref{eq:maxcohwitness} turns out to be optimal, independently of the noise parameter $p$, and we calculate $\Tr(W_k(\psi_d^+) \rho(p)) = \frac{1}{k}[(k-1)-p(d-1)]$. Figure~\ref{Fig:NMCWitness} shows the absolute value of the experimentally obtained (negative) expectation values of $W_k(\psi_4^+)$ for a range of values of $p$. This demonstrates that multilevel coherence can be quantitatively witnessed in the laboratory using only a single measurement. Experimentally, however, implementing the optimal witness requires a projection onto a maximally coherent state, which is very sensitive to noise. Indeed, in our experiment we observed a small degree of beam steering by the wave plates, leading to phase uncertainty between the basis states $\ket{0,1}$ and $\ket{2,3}$. As a consequence, the witness becomes suboptimal and only provides a lower bound on the RMC of the experimental state. In contrast, our results show that the larger number of measurements in the tomographic approach and the associated maximum likelihood reconstruction add resilience to experimental imperfections.

\subsection{Bounding multilevel coherence}
\label{sec:bounding}
In practice, one might often neither be able to perform full tomography on a system, nor be able to measure the optimal witness. Remarkably, one can obtain a lower bound on the RMC of an experimentally prepared state $\rho$ from {\it any} set of experimental data. Specifically, the SDP in Eq.~(S16) in the Supplementary Material~\cite{SI} computes the minimal RMC of a state $\tau \in \mathcal{D(H)}$ that is consistent with a set of measured expectation values $o_{i} = \mbox{Tr}(O_{i}\rho)$ for $n$ observables $\{O_{i}\}_{i=1}^{n}$ to within experimental uncertainty. This is particularly appealing when one has already performed a set of (well-characterised) measurements and wishes to use these to estimate the multilevel coherence of the input state. Note that $d^2-1$ linearly independent observables (assuming vanishingly small errors, and not including the identity, which accounts for normalisation) are sufficient to uniquely determine the state, in which case we could use the original SDP, Eq.~\eqref{Eq:RobustnessDual}. \change{A similar approach can in principle be used to bound other quantum properties, like entanglement, from limited data~\cite{Guhne2007}, also via the use of SDPs~\cite{Eisert2007}. In the case of entanglement one still has to deal with the fact that the separability condition is not a simple SDP constraint, which is relevant even in the case of complete information: so, in general, the obstacle constituted by lack of information, combines with the obstacle of the difficulty of entanglement detection.}

We experimentally estimate the lower bounds from Eq.~(S16) in the Supplementary Material~\cite{SI} for an increasing number of randomly chosen observables $O_{i}$, measured on a $4$-dimensional maximally coherent state and on a noisy maximally coherent state with $p=0.8874\pm0.0007$, see Fig.~\ref{Fig:RMCObservables}. The results show that our lower bounds become non-trivial already for a small number of observables, and converge to a sub-optimal yet highly informative value. The remaining gap of about $5\%$ between these bounds and the tomographically estimated RMC is due to our conservative $5\sigma$ error bounds, and could be improved by incorporating maximum-likelihood or Bayesian estimation techniques, see Supplementary Material~\cite{SI} for details. We also find that the number of measurements required for non-trivial bounds increases slowly with the coherence level, and the bounds saturate more quickly for states with more coherence.

\begin{figure}[h!]
  \begin{center}
\includegraphics[width=0.9\columnwidth]{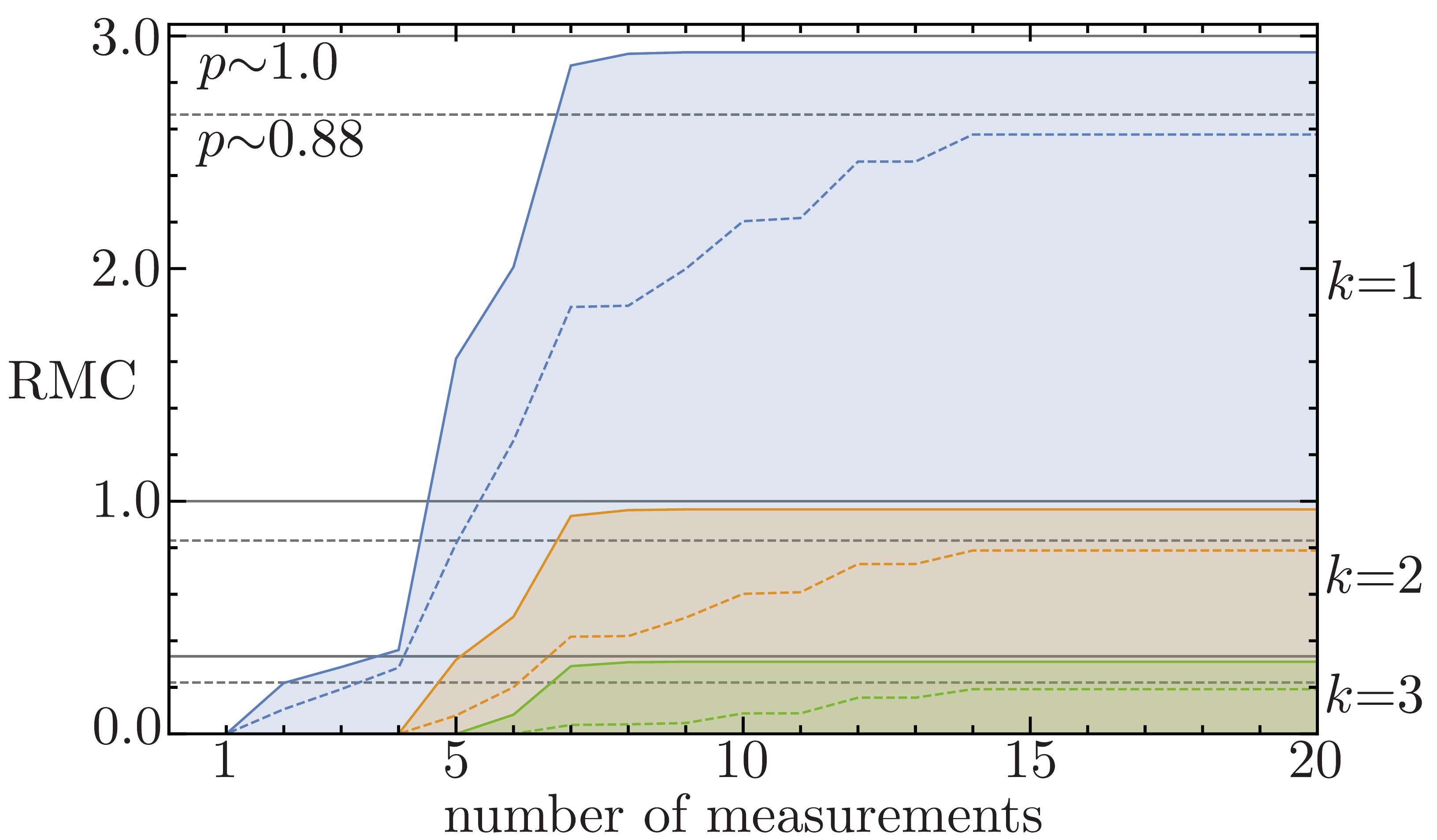}
  \end{center}
    \vspace{-2em}
\caption{\textbf{Bounding multilevel coherence from arbitrary measurements.} The blue, orange, and green solid lines correspond to the experimental lower bound on the robustness of multilevel coherence for $k=2,3,4$, respectively for a maximally coherent state $\ket{\psi_{4}^{+}}$, while the grey solid lines are the theory prediction. These bounds are obtained from the SDP in Eq.~(S16) in the Supplementary Material~\cite{SI} for an increasing number of randomly chosen projective measurements, taking into account 5$\sigma$ statistical uncertainties. The coloured dashed lines correspond to the lower bounds for the noisy maximally coherent state $\rho(0.8874\pm0.0007)$ using the same observables, with the grey dashed line being the theory prediction for this state.}
  \label{Fig:RMCObservables}
\end{figure}

\change{
We further describe how any single observable $O$ may provide a lower bound to the RMC~\cite{SI}. Consider witnesses of the form $W = \alpha \I + \beta O$, with $\alpha,\beta$ real coefficients, which then give a lower bound to the RMC via Eq.~\eqref{Eq:RobustnessDual}. Define the $k$-coherence numerical range of $O$ as the interval $\NR_{C_k}(O) = \{\Tr(O\sigma_{C_k}):\sigma\in C_k\}$ (the case $k=d-1$ was studied in \cite{johnson1981,johnsonrobinson1981}), and define its extreme points  $\lambda_{C_k}^{\min}(O)=\min \NR_{C_k}(O)$ and $\lambda_{C_k}^{\max}(O)=\max \NR_{C_k}(O)$. Notice that $\lambda^{\min}_{C_k}(O) = \min_{\CR(\ket{\psi})\leq k} \braket{ \psi | O |\psi} = \min_{I\in \mathcal{P}_k}\lambda^{\min} (P_I O P_I)$ (similarly for $\lambda^{\max}_{C_k}(O)$). Notice also that $\NR_{C_d}(O)$ is the standard numerical range of $O$, and $\lambda_{C_d}^{\min}(O)=\lambda^{\min}(O)$ (similarly for the maximal values). In general, $\NR_{C_k}(O)\subseteq \NR_{C_{k'}}(O)$ for $k\leq k'$. If $\Tr(O\rho)\in \NR_{C_k}(O)$, that is, if $\lambda_{C_k}^{\min}(O)\leq\Tr(O\rho)\leq \lambda_{C_k}^{\max}(O)$, then the expectation value of $O$ is compatible with $\rho$ being in $C_k$, and we do not gain any information on $R_{C_k}(\rho)$. If instead $\Tr(O\rho) >  \lambda_{C_k}^{\max}(O)$ or $\Tr(O\rho) < \lambda_{C_k}^{\min}(O)$, the following bound is non-trivial:
\begin{multline}
\label{eq:boundsingleobs}
R_{C_k}(\rho)\\
\geq \max\left\{0,\frac{\Tr(O\rho)-\lambda^{\max}_{C_k}(O)}{\lambda^{\max}_{C_k}(O)-\lambda^{\min}(O)},\frac{\lambda^{\min}_{C_k}(O)-\Tr(O\rho)}{\lambda^{\max}(O)-\lambda^{\min}_{C_k}(O)}\right\}.
\end{multline}
Notice that the lower bound is monotonically non-increasing with $k$.}

\subsection{Multilevel coherence as a resource for quantum-enhanced phase discrimination}
To demonstrate the operational significance of multilevel coherence we show that it is the key resource for the following task, illustrated in Fig.~\ref{Fig:PDgame}a. Suppose that a physical device can apply one of $n$ possible quantum operations $\{\Lambda_{m}\}_{m=1}^{n}$ to a quantum state $\rho$, according to the known prior probability distribution $\{p_{m}\}_{m=1}^{n}$. The output state is then subject to a single generalized measurement with elements $\{M_{m}\}_{m=1}^{n}$ satisfying $M_{m} \geq 0$ and $\sum_{m=1}^{n} M_{m} = \mathbb{I}$. Our objective is to infer the label $m$ of the quantum operation that was applied.

\begin{figure*}[ht!]
  \begin{center}
\includegraphics[width=0.9\textwidth]{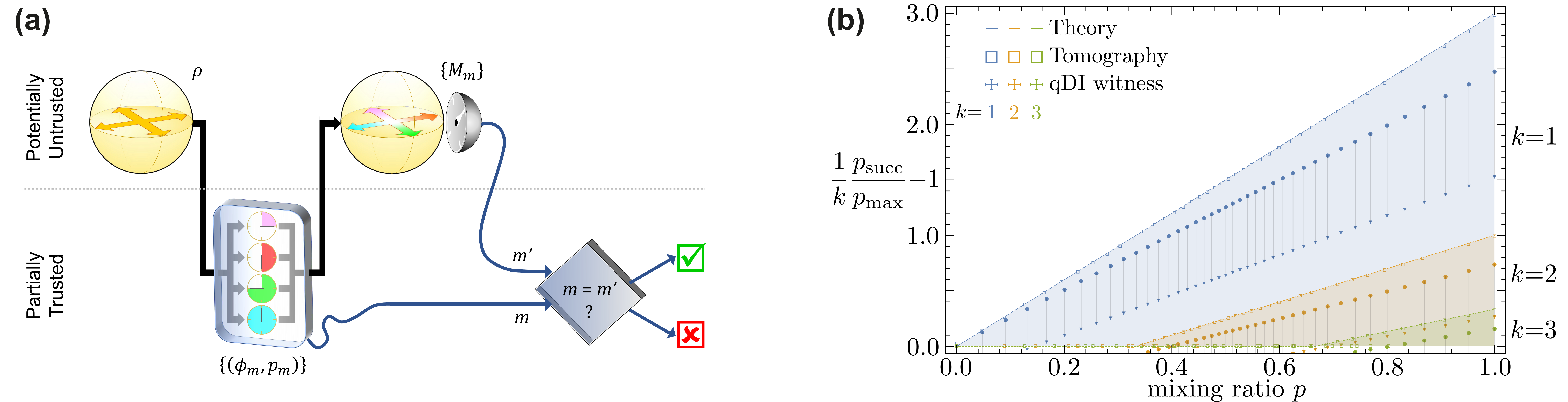}
  \end{center}
    \vspace{-2em}
\caption{\textbf{\change{Semi-device-independent} witnessing of multilevel coherence.} \textbf{(a)} A $d$-dimensional probe state $\rho$ is sent into a black box, which imprints one of $n$ phases $\{\phi_{m}\}_{m=1}^{n}$ onto the state at random according to the prior probability distribution $\{p_{m}\}_{m=1}^{n}$. To infer the index $m$ of the imprinted phase, the state is then subjected to a single generalised measurement with elements $\{M_{m}\}_{m=1}^{n}$, yielding outcome $m'$. This strategy succeeds, i.e.\ $m'=m$, with probability $p_{\rm succ}^{\Theta}(\rho)$, given by Eq.~\eqref{Eq:PSucc}, which exceeds the optimal classical success probability $p_{\max}\coloneqq\max\{p_{m}\}_{m=1}^{n}$ by a factor greater than $k$ only when $(k{+}1)$-level coherence is present in the initial state $\rho$. Since evaluating the probability of success can be done without any information about the measurement device, based only on the assumption that incoherent states are unchanged by the black box, this scheme provides a \change{semi-device-independent} method to witness and estimate the robustness of multilevel coherence in the probe.
\textbf{(b)} The experimentally measured bounds on the robustness of $(k+1)$-level coherence from the performance of noisy maximally coherent states $\rho(p)$ in the phase discrimination task $\tilde{\Theta}$, as a function of $p \in [0,1]$. Plot as in Fig.~\ref{Fig:NMCWitness}, where solid lines represent the theory predictions and open squares are the measured RMC from quantum state tomography for comparison. The filled circles (higher) correspond to the \change{semi-device-independent} witness as discussed in the text, under the assumption that the application of the phases leaves incoherent states invariant. To each filled circle corresponds an upside-down triangle (lower) which represents the conservative estimate of multilevel coherence obtained from the phase discrimination task by taking into account experimental imperfections in the implementation of the unitaries, see Supplementary Material~\cite{SI} for details. The gray lines connecting circles and corresponding triangles serve as a visual aid.
For all data, 5$\sigma$ statistical confidence regions obtained from Monte-Carlo resampling are on the order of $10^{-3}$ for $p$ and on the order of $10^{-2}$ for the RMC.}
  \label{Fig:PDgame}
\end{figure*}

We now consider a special case of these tasks, known as \emph{phase discrimination}, which is an important primitive in quantum information processing, featuring in optimal cloning, dense coding, and error correction protocols~\cite{gottesman1999fault,cerf2000asymmetric,hiroshima2001optimal,werner2001all}. Here the operations imprint a phase on the state through the transformation $\mathcal{U}_{\phi_{m}} (\rho) := U_{\phi_{m}}\rho U_{\phi_{m}}^{\dagger}$ where $U_{\phi_{m}} := \exp (-i H \phi_{m})$ is generated by the Hamiltonian $H = \sum_{j=0}^{d-1} j \ket{j}\bra{j}$. The probability of success for inferring the label $m$ in the task specified by $\Theta = \{(p_{m},\phi_{m})\}_{m=1}^{n}$ is then
\begin{equation}\label{Eq:PSucc}
p_{\rm{succ}}^{\Theta}(\rho) := \sum_{m=1}^{n}p_{m} \mbox{Tr}(\mathcal{U}_{\phi_{m}} (\rho)M_{m}).
\end{equation}
Since the Hamiltonian is diagonal in the classical basis and leaves fully incoherent states invariant, the strategy that maximizes $p_{\rm{succ}}^{\Theta}$ while at the same time only making use of incoherent states, is to guess the most likely label, that is, to take $M_m = \I\delta_{m,m_{\max}}$, succeeding with probability $p^\Theta_{\max}:= p_{m_{\max}} =\max\{p_{m}\}_{m=1}^{n}$. On the other hand, a probe state $\rho$ with non-zero coherence can outperform this strategy~\cite{Napoli2016,Piani2016}. Here, we find that genuine $(k{+}1)$-level coherence is necessary for $p_{\rm{succ}}^{\Theta}(\rho)$ to achieve a better than $k$-fold enhancement over $p^\Theta_{\rm{max}}$ in any phase discrimination task $\theta$.
\begin{theorem}
\label{Thm:PDk}
For any phase discrimination task $\Theta$ and any probe state $\rho$,
\begin{equation}\label{Eq:PSuccBoundsROCk}
\frac{p_{\rm{succ}}^{\Theta}(\rho)}{p^\Theta_{\max}} \leq k \left( 1 + R_{C_{k}} (\rho) \right).
\end{equation}
\end{theorem}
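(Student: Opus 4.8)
The plan is to combine the affine decomposition underlying the robustness with a worst-case bound on how well the free states in $C_k$ can perform in phase discrimination. Fix any measurement $\{M_m\}$ appearing in the task $\Theta$ (it suffices to prove the bound for every such measurement), and note that $X\mapsto\sum_m p_m\Tr(\mathcal{U}_{\phi_m}(X)M_m)$ is a linear functional that is non-negative on states. Since $C_k$ and $\mathscr{D(H)}$ are compact the infimum in Eq.~\eqref{Eq:RMC} is attained, so with $s:=R_{C_k}(\rho)$ we can write $\rho=(1+s)\sigma-s\tau$ for some $\sigma\in C_k$, $\tau\in\mathscr{D(H)}$. Applying the above functional and discarding the (non-negative) $\tau$-term gives $p_{\rm succ}^{\Theta}(\rho)\le(1+s)\sum_m p_m\Tr(\mathcal{U}_{\phi_m}(\sigma)M_m)$, so the whole statement reduces to the core inequality $\sum_m p_m\Tr(\mathcal{U}_{\phi_m}(\sigma)M_m)\le k\,p^\Theta_{\max}$ for every $\sigma\in C_k$.

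By writing $\sigma=\sum_j q_j\proj{\psi_j}$ with $\CR(\psi_j)\le k$ and $\{q_j\}$ a probability distribution, linearity again reduces this to the pure-state bound $\sum_m p_m\bra{\psi}U_{\phi_m}^{\dagger}M_m U_{\phi_m}\ket{\psi}\le k\,p^\Theta_{\max}$ for $\CR(\psi)\le k$ — the one genuinely new step. Let $I\in\mathcal{P}_k$ contain the support of $\ket{\psi}$ and set $P_I=\sum_{i\in I}\proj{i}$. Since $H$ is diagonal in the classical basis, each $U_{\phi_m}$ preserves $\mathrm{ran}\,P_I$, so $U_{\phi_m}\ket{\psi}$ remains in it and
\begin{multline*}
\sum_m p_m\bra{\psi}U_{\phi_m}^{\dagger}M_m U_{\phi_m}\ket{\psi}
\le\sum_m p_m\,\lambda^{\max}(P_I M_m P_I)\\
\le\sum_m p_m\,\Tr(P_I M_m P_I)
\le p^\Theta_{\max}\sum_m\Tr(P_I M_m P_I)=k\,p^\Theta_{\max},
\end{multline*}
using $P_I M_m P_I\ge0$, then $\sum_m M_m=\I$ and $\Tr(P_I)=k$. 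Chaining the inequalities yields $p_{\rm succ}^{\Theta}(\rho)\le(1+s)\,k\,p^\Theta_{\max}$, i.e.\ Eq.~\eqref{Eq:PSuccBoundsROCk}.

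I do not expect a serious obstacle. The positivity and linearity manipulations are routine; the only details needing care are the attainment of the infimum defining $R_{C_k}$ (immediate, as the set of pure states of coherence rank $\le k$ is closed and bounded, so $C_k$ is compact) and the observation that the diagonal phase unitaries preserve the subspace $\mathrm{span}\{\ket{i}:i\in I\}$, which is exactly what makes the factor $k$ — rather than a weaker constant — appear. The single conceptual point is to recognise that the $k$-level structure of a free probe forces all post-phase states into one common $k$-dimensional subspace, after which $p_{\rm succ}^{\Theta}\le k\,p^\Theta_{\max}$ is just the elementary dimension-counting estimate above, generalising the $k=1$ fact that a fully incoherent probe cannot beat the guessing strategy that succeeds with probability $p^\Theta_{\max}$.
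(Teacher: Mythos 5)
Your proof is correct, and its second half takes a genuinely different route from the paper's. The first step is the same in both: decompose $\rho$ via the (attained) optimal robustness, apply the positive linear functional $X\mapsto\sum_m p_m\Tr(\mathcal{U}_{\phi_m}(X)M_m)$, and discard the negative part to get $p_{\rm succ}^{\Theta}(\rho)\le(1+R_{C_k}(\rho))\,p_{\rm succ}^{\Theta}(\sigma)$ for some $\sigma\in C_k$. Where you differ is in showing $p_{\rm succ}^{\Theta}(\sigma)\le k\,p^\Theta_{\max}$ for $\sigma\in C_k$: the paper iterates the robustness argument a second time, writing $\sigma^{\star}\le(1+R_{C_1}(\sigma^{\star}))\delta^{\star}$ with $\delta^{\star}$ incoherent, then invoking $R_{C_1}(\sigma^{\star})\le k-1$ for states in $C_k$ together with the already-established fact that incoherent probes satisfy $p_{\rm succ}^{\Theta}(\delta)\le p^\Theta_{\max}$. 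You instead prove the bound directly by dimension counting: a coherence-rank-$\le k$ pure state and all its phase-shifted images live in a common $k$-dimensional diagonal subspace $\mathrm{ran}\,P_I$, and $\sum_m p_m\lambda^{\max}(P_I M_m P_I)\le p^\Theta_{\max}\Tr(P_I)=k\,p^\Theta_{\max}$ using $P_I M_m P_I\ge 0$ and $\sum_m M_m=\I$. The paper's version is more modular (it reuses two previously established facts and, as a by-product, yields the slightly sharper intermediate bound $p_{\rm succ}^{\Theta}(\rho)\le(1+R_{C_k}(\rho))(1+R_{C_1}(\sigma^{\star}))p^\Theta_{\max}$), while yours is more self-contained and makes transparent that the factor $k$ is exactly the dimension of the invariant subspace forced on a free probe. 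Both arguments are sound.
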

This theorem is proved in the Supplementary Material~\cite{SI}, where we also show that for the specific task $\tilde{\Theta} = \{(\frac{1}{d},\frac{2 \pi m}{d})\}_{m=1}^{d}$ of discriminating $d$ uniformly distributed phases and for a noisy maximally coherent probe, the bound in Eq.~\eqref{Eq:PSuccBoundsROCk} becomes tight. This demonstrates the key role of genuine multilevel coherence as a necessary ingredient for quantum-enhanced phase discrimination, unveiling a hierarchical resource structure which goes significantly beyond previous studies only concerned with the coarse-grained description of coherence \cite{Piani2016,Napoli2016}.

Note that this provides an operational significance to the robusteness of multilevel coherence \emph{in addition} to its operational significance in terms of resilience of noise, which in turn can be thought of also in geometric terms.

\textbf{\change{Semi-device-independent} witnessing of multilevel coherence.}
An important consequence of Eq.~\eqref{Eq:PSuccBoundsROCk} is that whenever $p_{\rm{succ}}^{\Theta}(\rho)/p^\Theta_{\max} > k$, the probe state $\rho$ must have $(k+1)$-level coherence. Consequently, the performance of an unknown state $\rho$ in any phase discrimination task $\Theta$ provides a witness of genuine multilevel quantum coherence. We remark that the success probability for an arbitrary quantum state can be evaluated without any knowledge of the devices used---neither of the one imprinting the phase, nor the final measurement. Evaluating the witness only relies on the fact that $p_{\rm{succ}}^{\Theta}(\rho) \leq p^\Theta_{\max}$ for any $\rho \in C_{1}$, which in turn relies on $\mathcal{U}_{\phi_m}(\rho) = \rho$ for any $\rho \in C_{1}$. In other words, under the condition that no information is imprinted on incoherent states, the witness can be evaluated without any additional knowledge of the used devices. We conclude that phase discrimination, as demonstrated in this paper, is a \change{\emph{semi-device-independent}} approach to measure multilevel coherence as quantified by the RMC.

Figure~\ref{Fig:PDgame} shows our experimental results for \change{semi-device-independent} witnessing of multilevel coherence using the phase discrimination task $\tilde\Theta$ for a range of noisy maximally coherent states, also taking into account experimental imperfections when it comes to the hypothesis $\mathcal{U}_{\phi_m}(\rho) = \rho$ for any $\rho \in C_{1}$ (see Supplementary Material~\cite{SI}). As any witnessing approach, this method in general only provides lower bounds on the RMC, yet in contrast to the optimal multilevel witness measured in Fig.~\ref{Fig:NMCWitness}, the present approach does not rely on any knowledge of the used measurements.

\section{Discussion}
The study of genuine multilevel coherence is pivotal, not only for fundamental questions, but also for applications ranging from transfer phenomena in many-body and complex systems to quantum technologies, including quantum metrology and quantum communication. In particular, for verifying that a quantum device is working in a nonclassical regime it is crucial to certify and quantify multilevel coherence with as few assumptions as possible. Our metrological approach satisfies these criteria by making it possible to verify the preparation of large superpositions and discriminate between them, using only the ability to apply phase transformations that leave incoherent states (approximately) invariant.
\change{The goal of the phase-discrimination task we consider is to distinguish a finite-number of phases in a single-shot scenario, and the figure of merit we adopt is the probability of success of correctly identifying the phase imprinted onto the input state. In particular, given our figure of merit, there is no notion of `closeness' of the guess to the actual phase. In contrast, in sensing applications, the task is often to measure an unknown phase with high precision~\cite{giovannetti2011}, a task we refer to as `phase estimation'. For the latter the figure of merit is the uncertainty of the estimate, and superpositions of the kind $(\ket{1}+\ket{d})/\sqrt{2}$, that is, involving eigenstates of the observable that correspond to the largest gap in eigenvalues, can be argued to be optimal~\cite{giovannetti2006}. When dealing with phase estimation, the relevant notion is that of unspeakable coherence (or asymmetry)~\cite{marvian2016quantify}, and which eigenstates are superposed is very important. On the other hand, for the kind of phase-discrimination task we consider, genuine multilevel coherence of a state like $(\ket{1}+\ket{2}+\ldots+\ket{d})/\sqrt{d}$ plays a key role. While it was already known that such a maximally coherent state provides the best performance in discriminating equally spaced phases~\cite{1976quantum}, here we find that the robustness of multilevel coherence of a generic mixed state captures its usefulness in a generic phase discrimination task. This allows us to reverse the argument, and use such usefulness to certify multilevel coherence in a semi-device-independent way.}

Our analysis of coherence rank and number, multilevel coherence witnesses, and robustness, uses and adapts notions originally studied in the context of entanglement theory~\cite{horodecki2009quantum}, and hence provides further parallels between the resource theories of quantum coherence and entanglement, whose interplay is attracting substantial interest~\cite{streltsov2016quantum}. However, a notable difference between the two that we find, emphasise, and exploit, is that multilevel coherence, unlike entanglement, can be characterised and quantified via semidefinite programming, rather than general convex optimisation~\cite{brandao2005quantifying}. This highlights multilevel coherence as a powerful, yet experimentally accessible quantum resource.

\change{Remarkably, we show that is it possible to use the notion of the comparison matrix to devise a test that faithfully detects genuine three-level coherence and above. We expect such a result to find widespread application in the study of coherence, both theoretically and experimentally. As two immediate applications, we were able to provide a full analytical classification of multilevel coherence for a qutrit, as well as to prove the existence of a ball (actually, the largest possible one, in the Hilbert-Schmidt norm) around the maximally mixed state that contains states that do not exhibit genuine multilevel coherence. This parallels the celebrated result, in entanglement theory, that there is a ball of fully separable states around the maximally mixed state of a multipartite system, and explicitly shows that generating genuine multilevel coherence is a non-trivial experimental task.}

\change{It is worth remarking that a number of our results also apply in the case of infinite-dimensional system, such as a harmonic oscillator or quantized field. Indeed, one can always consider, e.g., the quantum (multilevel) coherence exhibited by a system among a subset of states of the incoherent basis, which then provides a bound on the (multilevel) coherence in the entire Hilbert space of the system.}

Finally, our work triggers several questions to stimulate further research. These include conceptual questions regarding the exact (geometric) structure and volume of the sets $C_k$, and how sets $C_k$ and $C'_l$ defined with respect to different classical bases intersect\change{, the best further use of tools like the comparison matrix to detect and quantify multilevel coherence, or general purity-based bounds on multilevel coherence.} From a more practical point of view, a natural question is how to best choose a finite set of observables to estimate the multilevel coherence of the state of a system, for example via the SDP in the Supplementary Material~\cite{SI}. This is particularly important when one has limited access to the system under observation, as in a biological setting \cite{sarovar2010quantum,li2012witnessing,Huelga13}. Independently of the particular choice of observables, our work provides a plethora of readily applicable tools to facilitate the detection, classification, and quantitative estimation of quantum coherence phenomena in systems of potentially large complexity with minimum assumptions, paving the way towards a deeper understanding of their functional role. \change{Further theoretical investigation and experimental progress along these lines may lead to fascinating insights and advances in other branches of science where the detection and exploitation of (multilevel) quantum coherence is or can be of interest.}

\subsection*{Acknowledgments}
We thank M.~B.~Plenio, B.~Regula, V.~Scarani and A.~Streltsov for helpful discussions and T.~Vulpecula for experimental assistance. This work was supported in part by the Centres for Engineered Quantum Systems (CE110001013) and for Quantum Computation and Communication Technology (CE110001027), the Engineering and Physical Sciences Research Council (grant number EP/N002962/1), and the Templeton World Charity Foundation (TWCF 0064/AB38). We acknowledge financial support from the European Union's Horizon 2020 Research and Innovation Programme under the Marie Sk{\l}odowska-Curie Action OPERACQC (Grant Agreement No.~661338) and the ERC Starting Grant GQCOP (Grant Agreement No.~637352), and from the Foundational Questions Institute under the Physics of the Observer Programme (Grant No.~FQXi-RFP-1601).

\appendix*
\onecolumngrid
\medskip

\section{\ SUPPLEMENTARY MATERIAL}

Here we provide detailed derivations and proofs of all results in the main text.

\subsection{Incoherent and $k$-incoherent operations.}
\change{We argue here that a fully incoherent operation is also a $k$-incoherent operation, that is, it cannot create $k+1$ coherence from states that are at most $k$-coherent.}

Consider a (fully) incoherent operation with corresponding Kraus operators $\{K_{i}\}$ \cite{baumgratz2014quantifying}. This operation is also $k$-incoherent if $\CR\left(\ket{\psi_{i}}\right) \leq k$, where $\ket{\psi_{i}} = K_{i}\ket{\psi}/\sqrt{p_{i}}$ with $p_{i} =\braket{\psi|K_{i}^{\dagger}K_{i}|\psi}$, for all pure states $\ket{\psi}$ such that $\CR(\ket{\psi}) \leq k$ and for all $i$ and $k \in \{2,3,\ldots,d\}$, which means that the Kraus operators $\{K_{i}\}$ together compose a $k$-incoherent operation. That this holds true is immediate, given that the Kraus operators $K_i$ have the form $K_i = \sum_j e^{i\phi_{i|j}} \sqrt{p_{i|j}}\ket{f_i(j)}\bra{j}$, with $f_i:\{1,\ldots,d\}\rightarrow \{1,\ldots,d\}$, $\phi_{i|j}$ a phase, and $p_{i|j}=\bra{j}K_i^\dagger K_i \ket{j}$.

\subsection{$k$-decohering operations.}
We define a \emph{$k$-decohering} map $\Lambda$ as one that destroys multilevel coherence, more precisely, such that $\Lambda(\mathscr{D(H)}) \subseteq C_{k}$. These operations generalise the notion of resource destroying maps \cite{liu2017resource} to multilevel coherence. An example of a $k$-decohering operation is the \emph{$k$-dephasing operation}
\begin{equation}\label{Eq:kDephasing}
\Delta_k(\rho) := \frac{1}{{{d-1}\choose{k-1}}} \sum_{I\in \mathcal{P}_k} P_{I} \rho P_{I},
\end{equation}
where $\mathcal{P}_{k}$ is the set of all the $k$-element subsets of $\{1, 2, \ldots, d\}$, and $P_{I} \coloneqq \sum_{i \in I} \ket{i}\bra{i}$. Since the $P_{I}$ are projectors onto $k$-dimensional subspaces, $\Delta_k(\mathscr{D(H)})\subseteq C_{k}$.
The linearity and complete positivity of $\Delta_{k}$ follow directly from the construction. Trace preservation is implied by the observation that $\Delta_{k}$ has the alternative expression
\begin{equation}\label{Eq:DeltakNiceForm}
\Delta_{k}(\rho) = \frac{k-1}{d-1} \rho + \frac{d-k}{d-1}\Delta_{1}(\rho),
\end{equation}
since $\Delta_{1}$ is clearly trace-preserving. That Eq.~\eqref{Eq:DeltakNiceForm} holds can be seen from the fact that, for all $m, n \in \{1,2,\ldots,d\}$ with $m \neq n$,
\begin{eqnarray}
\sum_{I\in \mathcal{P}_k} P_{I} \ket{m}\!\bra{m} P_{I} &=&  {{d-1}\choose{k-1}} \ket{m}\bra{m} \nonumber \\
\sum_{I\in \mathcal{P}_k} P_{I} \ket{m}\bra{n} P_{I} &=&  {{d-2}\choose{k-2}} \ket{m}\bra{n},
\end{eqnarray}
with ${{d-2}\choose{k-2}}$ the number of $I\in \mathcal{P}_k$ that contain two fixed indexes. The form of Eq.~\eqref{Eq:DeltakNiceForm} then follows directly from the definition of $\Delta_{k}$ in Eq.~\eqref{Eq:kDephasing}.

\change{
\subsection{Properties of the sets $C_{k}$.}
\change{
In this section we prove the simple sufficient condition for a state $\rho$ to be in $C_{k}$, Eq.~(7) of the main text, which also allows us to claim that every $C_k$ with $k\geq 2$ has non-zero volume within the set of all states. We also provide a proof of the characterization of the dual set $C_k^*$ presented in Theorem~2 of the main text.

\medskip
{\bf Sufficient condition for inclusion in $C_{k}$.} Using the $k$-dephasing map $\Delta_k$ of Eq.~\eqref{Eq:kDephasing}} one can characterize a non-zero-volume class of states which are in $C_k$.
Indeed, we argue that any state $\rho$ that satisfies
\begin{equation}
\label{eq:C_ksimplecondition}
\rho \geq \frac{d-k}{d-1} \Delta_1(\rho),
\end{equation}
is in $C_k$. Let us introduce the set $D_k =\{\rho \,\, | \,\, \rho\in \mathscr{D(H)},\,\rho \textrm{ satisfies~\eqref{eq:C_ksimplecondition}} \}$. Such a set is convex. It is also easy to see that, for $k\geq 2$, such a set has non-zero volume. Indeed, the maximally mixed state satisfies~\eqref{eq:C_ksimplecondition} for all $k$, with strict inequality for $k\geq 2$. This implies that, for any $k\geq 2$, any state that is an arbitrary but small enough perturbation of the maximally mixed state will still satisfy~\eqref{eq:C_ksimplecondition}. One has the following.
\begin{theorem}
\label{Thm:CkCharacterisation}
The inclusions $D_k\subseteq \Delta_k(\mathscr{D(H)}) \subseteq C_{k}$ hold for any $k=1,\ldots,d$.
\end{theorem}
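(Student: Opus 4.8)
The statement splits into two inclusions, $D_k\subseteq \Delta_k(\mathscr{D(H)})$ and $\Delta_k(\mathscr{D(H)})\subseteq C_k$, and I would treat them separately. The second one is essentially a restatement of how $\Delta_k$ is built. Fix any state $\rho$. By Eq.~\eqref{Eq:kDephasing}, $\Delta_k(\rho)=\binom{d-1}{k-1}^{-1}\sum_{I\in\mathcal{P}_k}P_I\rho P_I$, and each summand $P_I\rho P_I\geq 0$ is supported on the $k$-dimensional subspace $\mathrm{span}\{\ket{i}:i\in I\}$, which is spanned by incoherent basis vectors; hence every pure state appearing in its spectral decomposition has coherence rank at most $k$, so after normalization $P_I\rho P_I/\Tr(P_I\rho P_I)\in C_k$ whenever the trace is non-zero. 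Since $\Delta_k$ is trace-preserving (Eq.~\eqref{Eq:DeltakNiceForm}), the weights $\Tr(P_I\rho P_I)/\binom{d-1}{k-1}$ sum to one, so $\Delta_k(\rho)$ is a convex combination of elements of $C_k$, and convexity of $C_k$ gives $\Delta_k(\rho)\in C_k$.

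For the first inclusion, let $2\leq k\leq d$ and take $\rho\in D_k$, i.e.\ $\rho\geq\frac{d-k}{d-1}\Delta_1(\rho)$. The idea is to exhibit an explicit preimage of $\rho$ under $\Delta_k$, obtained by inverting the affine identity $\Delta_k(\sigma)=\frac{k-1}{d-1}\sigma+\frac{d-k}{d-1}\Delta_1(\sigma)$ of Eq.~\eqref{Eq:DeltakNiceForm}. This suggests the candidate
\[
\sigma:=\frac{1}{k-1}\left[(d-1)\,\rho-(d-k)\,\Delta_1(\rho)\right],
\]
and I would then verify three things. First, $\Tr\sigma=\frac{1}{k-1}\big[(d-1)-(d-k)\big]=1$. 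Second, $\sigma\geq 0$: indeed $(k-1)\sigma=(d-1)\big(\rho-\frac{d-k}{d-1}\Delta_1(\rho)\big)$, which is positive semidefinite precisely by the inequality defining $D_k$, so $\sigma\in\mathscr{D(H)}$. Third, using linearity and idempotence of $\Delta_1$ one gets $\Delta_1(\sigma)=\frac{1}{k-1}\big[(d-1)-(d-k)\big]\Delta_1(\rho)=\Delta_1(\rho)$, whence substituting $\sigma$ into Eq.~\eqref{Eq:DeltakNiceForm} yields $\Delta_k(\sigma)=\frac{1}{d-1}\big[(d-1)\rho-(d-k)\Delta_1(\rho)\big]+\frac{d-k}{d-1}\Delta_1(\rho)=\rho$. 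Hence $\rho=\Delta_k(\sigma)\in\Delta_k(\mathscr{D(H)})$.

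It remains to dispose of the degenerate endpoints, which fall outside the division by $k-1$ above. For $k=1$, any $\rho\in D_1$ satisfies $\rho-\Delta_1(\rho)\geq 0$; but this operator has all diagonal entries equal to zero, and a positive semidefinite matrix with vanishing diagonal is zero, so $\rho=\Delta_1(\rho)\in C_1=\Delta_1(\mathscr{D(H)})$ and the chain of inclusions becomes a chain of equalities. For $k=d$ one has $\Delta_d=\idmap$ and $C_d=\mathscr{D(H)}=D_d$, again trivial. I do not expect a genuine obstacle in this proof: the only mildly non-obvious step is guessing the correct preimage $\sigma$, which is forced by inverting Eq.~\eqref{Eq:DeltakNiceForm}, and the translation of ``$\sigma\geq 0$'' into the defining inequality of $D_k$; the $k=1$ case is the only place requiring a separate (one-line) argument.
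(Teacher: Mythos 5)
Your proof is correct and follows essentially the same route as the paper's: the inclusion $\Delta_k(\mathscr{D(H)})\subseteq C_k$ from the support structure of the $P_I\rho P_I$, and the preimage $\sigma=\frac{d-1}{k-1}\rho-\frac{d-k}{k-1}\Delta_1(\rho)$ obtained by inverting Eq.~\eqref{Eq:DeltakNiceForm}, with positivity supplied by the defining inequality of $D_k$. Your separate one-line treatment of $k=1$ (a positive semidefinite matrix with vanishing diagonal is zero) is just a rephrasing of the paper's observation that $\tau\geq\sigma$ for normalized states forces $\tau=\sigma$.
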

\begin{proof}[Proof of Theorem~\ref{Thm:CkCharacterisation}]
For $k=1$, the inequality characterizes exactly the set of fully incoherent states $C_1$. This is because, given two normalized states $\tau$ and $\sigma$, $\tau\geq \sigma$ implies $\tau=\sigma$; in our case the implication is $\rho=\Delta_1[\rho]$.

For $k \in \{2,\ldots,d\}$, and for a state $\rho$ satisfying Eq.~\eqref{eq:C_ksimplecondition}, that is for $\rho\in D_k$, consider the operator
\begin{equation}
\sigma = \frac{d-1}{k-1} \rho - \frac{d-k}{k-1} \Delta_1(\rho),
\end{equation}
so that $\rho = \Delta_{k}(\sigma)$. We can see that $\sigma$ has unit trace by construction, and furthermore that Eq.~\eqref{eq:C_ksimplecondition} implies $\sigma \geq 0$. Thus, $\sigma \in \mathscr{D(H)}$ and $\rho\in \Delta_k(\mathscr{D(H)}))$.
\end{proof}}

\medskip
{\bf Characterising the dual sets $C_{k}^{\star}$.} Here we prove Theorem~2 of the main text, i.e.\ that $W \in C_{k}^{\star}$ if and only if $P_{I}W P_{I} \geq 0$ for all $I \in \mathcal{P}_{k}$.

\begin{proof}[Proof of Theorem~2]
Formally, the set $C^*_k$ of $(k{+}1)$-level coherence witnesses is obtained as the dual of the set $C_k$ and given by
\begin{equation}\label{Eq:CkStar}
C^*_k = \{W : W=W^\dagger,\, \Tr(W\sigma) \geq 0,\,\,\forall\,\sigma\in C_k\}.
\end{equation}
This definition, together with the convexity of $C_k$ implies that it is sufficient to see that
\begin{equation}
\braket{\psi |W| \psi} \geq 0 \quad \forall \; \ket{\psi}\textrm{ such that }\CR (\ket{\psi}) \leq k,
\end{equation}
if and only if $P_{I}W P_{I} \geq 0$ for all $I \in \mathcal{P}_{k}$. This is immediate since, on the one hand, for any given $I \in \mathcal{P}_{k}$ the action of projecting with $P_{I}$ on an arbitrary $\ket{\psi} \in \mathscr{D(H)}$ is either to return the null vector or a pure state (up to normalisation) with coherence rank not exceeding $k$. On the other hand, for any $\ket{\psi}$ such that $\CR (\ket{\psi}) \leq k$, one can always find an $I \in \mathcal{P}_{k}$ such that $P_{I} \ket{\psi} = \ket{\psi}$.
\end{proof}

\change{\subsection{Witness of multilevel coherence for a pure state}
\label{sec:witnesspurestateSI}
In the main text we already argue that the witness
\begin{equation}
\label{eq:maxcohwitness_SI}
W_k(\psi) = \I - \frac{1}{\sum_{i=1}^k |c_i^{\downarrow}|^2}\proj{\psi} ,
\end{equation}
where $c_{i}^{\downarrow}$ are the coefficients $c_{i}$ rearranged in non-increasing modulus order, detects the $(k+1)$-multilevel coherence of the state $\ket{\psi}$, if present, by means of a negative expectation value $\braket{\psi|W_k(\psi)|\psi}$. Here we provide details of the proof that $W_k(\psi)$ is a proper multilevel-cohrence witness, that is, that $\mbox{Tr}(W \sigma) \geq 0$ for all $\sigma\in C_{k}$. Given that $C_k$ is a convex set whose extreme points are pure states with coherence rank less or equal to $k$, and given that the trace functional is linear in its argument, it is enough to check that $\braket{\phi|W_k(\psi)|\phi}\geq 0$ for all pure states $\ket{\phi}$ with coherence rank less of equal to $k$. Given the structure of $W_k(\psi)$, this is equivalent to proving that
$\max_{\CR(\phi)\leq k} |\braket{\phi|\psi}|^2 \leq  \sum_{i=1}^k |c_i^{\downarrow}|^2$ (actually, with equality). This is readily proven as follows. Let $\ket{\phi}=\sum_{i\in I_\phi}\phi_i\ket{i}$ be the decomposition of
a generic $\ket{\phi}$ with $\CR(\phi)\leq k$, with $I_\phi\subseteq \{1,2,\dots,d\}$ the $\phi$-dependent subset of at most $k$ coefficients $\phi_i$ that do not vanish. Then
\begin{equation}
\begin{split}
|\braket{\phi|\psi}|
&= \left|\sum_{i\in I_\phi} \phi_i^*c_{i}^{\downarrow}\right|\\
&\leq \sqrt{\sum_{i\in I_\phi} |\phi_i|^2}\sqrt{\sum_{i\in I_\phi} |c_{i}^{\downarrow}|^2} \\
&\leq \max_{I:|I|\leq k} \sqrt{\sum_{i\in I} |c_{i}^{\downarrow}|^2}\\
& = \sqrt{\sum_{i=1}^k |c_{i}^{\downarrow}|^2},
\end{split}
\end{equation}
where: the first inequality is due to the Cauchy-Schwarz inequality; the second inequality comes from the fact that $\ket{\phi}$ is a normalized state and that we optimize over any index set $I$ of cardinality $|I|\leq k$, rather than being restricted to $I_\phi$; the last equality comes from the ordering of the coefficients $c_i^\downarrow$.}

\change{
\subsection{Eigenvalues of multilevel-coherence witnesses.}
As mentioned in Observation~1 in the main text, the number of negative eigenvalues of a multilevel-coherence witness is severely constrained: $W \in C_{k}^{\star}$ only if it has at most $d-k$ negative eigenvalues (counting multiplicity). Let also $\lambda^{\min}(X)$ ($\lambda^{\max}(X)$) denote the smallest (largest) eigenvalues of $X=X^\dagger$. If $W\in C^\star_{k}$, then its most negative eigenvalue, $\lambda^{\min}(W)$, satisfies
\begin{equation}
\lambda^{\min}(W)\geq -\frac{d-k}{k}\lambda^{\max}(W).
\label{eq:maxnegW}
\end{equation}}

\begin{proof}[Proof of Observation~1]
\change{Let $\lambda^\downarrow_i(X)$ be the eigenvalues of $X=X^\dagger$, ordered in monotonically decreasing order, so that $\lambda^{\min}(X)=\lambda^\downarrow_d(X)$ and $\lambda^{\max}(X)=\lambda^\downarrow_1(X)$.}

\change{The interlacing theorem~\cite{bhatiamatrixanalysis} states that, if $A$ is a $d\times d$ Hermitian matrix, and $B = PAP$, with $P$ a $k$-dimensional projection, then one has $\lambda^\downarrow_j(A)\geq\lambda^\downarrow_j(B)\geq\lambda^\downarrow_{j+d-k}(A)$, for $j=1,\ldots,k$.	In our case $A=W$, and $B=P_I W P_I$, for $I\in \mathcal{P}_k$. Since $W$ is assumed to be in $C_k^\star$, $B\geq 0$, that is, $\lambda^\downarrow_j(B)\geq 0$ for all $j=1,\ldots,d$. Hence, also the largest $k$ eigenvalues of $A$ must be non-negative, that is, $A$ can at most have $d-k$ negative eigenvalues.}
	
\change{For the proof of Eq.~\eqref{eq:maxnegW}, let $\ket{w^{\min}}$ be the eigenstate corresponding to the lowest eigenvalue $\lambda^{\min}(W)$ of $W$. One has
\begin{equation}
\label{eq:boundW}
\begin{split}
0&\leq \min_{\CR(\ket{\phi})\leq k} \braket{\phi | W | \phi} \\
 &\leq \min_{\CR(\ket{\phi})\leq k} \left(\lambda^{\min}(W)|\braket{\phi | w^{\min}}|^2+\lambda^{\max}(W)(1-|\braket{\phi | w^{\min}}|^2)\right)\\
 &=  \lambda^{\min}(W)\max_{\CR(\ket{\phi})\leq k}|\braket{\phi | w^{\min}}|^2+\lambda^{\max}(W)(1-\max_{\CR(\ket{\phi})\leq k}|\braket{\phi | w^{\min}}|^2)\\
 &\leq \lambda^{\min}(W) \frac{k}{d}+\lambda^{\max}(W)\frac{d-k}{d} ,
\end{split}
\end{equation}
where the first inequality is due to $W\in C_k^\star$, and the second inequality follows from the definition of largest and smallest eigenvalue of $W$ and the normalization of $\ket{w^{\min}}$. In the last inequality, we have used the fact that, for any fixed $\ket{\psi}=\sum_{i=1}^d c_i \ket{i}$, it holds $\max_{\CR(\ket{\phi})\leq k}|\braket{\phi | \psi}|^2 = \sum_{i=1}^k |c^\downarrow_i|^2$ (see Section~\ref{sec:witnesspurestateSI}), where $\{c^\downarrow_i\}$ are the coefficients of $\ket{\psi}$ rearranged in non-increasing order, with respect to their modulus.  In turn, because of normalization of $\ket{\psi}$, and because of the ordering of the coefficients $c^\downarrow_i$, it holds $\sum_{i=1}^k |c^\downarrow_i|^2\geq k/d$. Eq.~\eqref{eq:maxnegW} is obtained by simple rearrangement.}
\end{proof}
\change{It is worth remarking that the $k+1$-level-coherence witness $W_{k}(\psi^+_d)$ in Section 1D of the main text saturates the bound on Eq.~\eqref{eq:maxnegW}. Also, any optimal multilevel-coherence witness (that is, a witness whose expectation value gives the RMC), has necessarily largest eigenvalue equal to $1$, in order to satisfy tightly the constraint $W\leq \I$ in Eq.~(11); thus, any \emph{optimal} $(k+1)$-level-coherence witness $W_{k}$ actually satisfies $\lambda^{\min}(W_{k})\geq -\frac{d-k}{k}$.}

\change{
\subsection{Analytical criterion for genuine multilevel coherence}
We now present the proof of Theorem~1 in the main text and deduce from it the condition in Eq.~(8), which identifies the largest Euclidean ball centered around the maximally mixed state and entirely contained inside $C_2$.
We start by remarking that by Carath\'eodory's theorem there exists a decomposition of any $d\times d$ density matrix $\rho$ into at most $d^2$ pure states with coherence rank at most $\CN(\rho)$. This entails---among other things---that the function $\CN$ is lower semicontinuous, i.e.\ that if a sequence of density matrices $\rho_n$ satisfies $\CN(\rho_n)\leq k$ and $\lim_{n\to \infty} \rho_n = \rho$ then also $\CN(\rho)\leq k$.

We remind the reader that a $d\times d$ matrix $A$ is said to be \emph{diagonally dominant} if~\cite[Definition~6.1.9]{HJ1}
\begin{equation}
|A_{ii}| \geq \sum_{j\neq i} |A_{ij}| \qquad \forall\ i=1,\ldots, d\, ,
\label{diagonal dominance}
\end{equation}
and \emph{strictly diagonal dominant} if Eq.~\eqref{diagonal dominance} holds with strict inequality. Diagonally dominant matrices enjoy many useful properties, some of which are as follows.

\begin{lemma} \emph{\cite[Corollary~5.6.17]{HJ1} and \cite[Problem~7~p.~40]{HJ2}.} \label{lemma dd positive}
Strictly diagonally dominant matrices are invertible. A Hermitian matrix with non-negative diagonal that is also diagonally dominant is positive semidefinite.
\end{lemma}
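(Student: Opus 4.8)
The plan is to prove the two claims separately and to deduce the second from the first. For the invertibility of a strictly diagonally dominant matrix $A$ (the Levy--Desplanques theorem), I would argue by contradiction. Suppose $A$ is singular, so $Ax=0$ for some nonzero vector $x$. Let $i$ be an index at which $|x_i|$ is maximal, so that $|x_i|>0$ and $|x_j|\leq |x_i|$ for all $j$. Isolating the $i$-th component of the equation $Ax=0$ gives $A_{ii}x_i=-\sum_{j\neq i}A_{ij}x_j$; taking moduli and using the triangle inequality together with $|x_j|\leq|x_i|$ yields
\begin{equation}
|A_{ii}|\,|x_i|\leq \sum_{j\neq i}|A_{ij}|\,|x_j|\leq |x_i|\sum_{j\neq i}|A_{ij}|.
\end{equation}
Dividing through by $|x_i|>0$ contradicts the strict dominance $|A_{ii}|>\sum_{j\neq i}|A_{ij}|$. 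Hence no nontrivial null vector exists and $A$ is invertible.

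For the second claim, let $A=A^\dagger$ be diagonally dominant with $A_{ii}\geq 0$ for all $i$. My preferred route reuses the first claim via a perturbation argument: for any $\ve>0$ the matrix $A+\ve\I$ is Hermitian, has the same off-diagonal entries (hence the same quantities $R_i:=\sum_{j\neq i}|A_{ij}|$), and satisfies $(A+\ve\I)_{ii}=A_{ii}+\ve>A_{ii}\geq R_i$, so it is strictly diagonally dominant. By the first claim $A+\ve\I$ is invertible, i.e.\ $-\ve$ is not an eigenvalue of $A$. Since this holds for every $\ve>0$ and $A$ is Hermitian (so its eigenvalues are real), $A$ can have no strictly negative eigenvalue, and therefore $A\geq 0$. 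An equally short alternative is a direct appeal to Gershgorin's disc theorem: every eigenvalue of $A$ lies in some disc centered at $A_{ii}$ of radius $R_i$, and because $A$ is Hermitian these eigenvalues are real and thus confined to the intervals $[A_{ii}-R_i,\,A_{ii}+R_i]$; diagonal dominance combined with $A_{ii}\geq 0$ gives $A_{ii}=|A_{ii}|\geq R_i$, so each interval lies in $[0,\infty)$ and every eigenvalue is non-negative.

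The argument is essentially routine, so I expect no serious obstacle; the only genuinely delicate points are conceptual rather than computational. First, the choice of the maximal-modulus component of the null vector is exactly what makes the contradiction in the Levy--Desplanques argument close, and without it the estimate does not sharpen correctly. Second, I would emphasize that the non-negativity of the diagonal is precisely the hypothesis needed to upgrade the dominance inequality $|A_{ii}|\geq R_i$ into $A_{ii}-R_i\geq 0$: diagonal dominance on its own only controls $|A_{ii}|$ and cannot by itself rule out negative eigenvalues. Keeping these two observations in view, both parts follow quickly.
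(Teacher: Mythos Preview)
Your proof is correct; both the Levy--Desplanques contradiction argument and the perturbation/Gershgorin routes for positive semidefiniteness are standard and cleanly executed. Note, however, that the paper does not actually prove this lemma: it simply cites the result from Horn and Johnson (\cite[Corollary~5.6.17]{HJ1} and \cite[Problem~7~p.~40]{HJ2}) without argument, so there is no paper proof to compare against---your write-up supplies what the paper only references.
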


\begin{lemma} \emph{\cite[Theorem~1]{Varga76} or \cite[Theorem~2.5.3 and Exercise~p.~124]{HJ2}.} \label{lemma congruence dd}
Given a Hermitian matrix $A$, there exists a diagonal matrix $D>0$ such that $D A D$ is strictly diagonally dominant if and only if $M(A) > 0$.
\end{lemma}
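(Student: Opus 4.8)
The plan is to eliminate the diagonal matrix $D$ and reduce the statement to a property of the comparison matrix alone. Write $B := M(A)$. Since $A$ is Hermitian, $|A_{ij}| = |A_{ji}|$, so $B$ is a real symmetric matrix with non-negative diagonal and non-positive off-diagonal entries (a symmetric Z-matrix). For a positive diagonal $D = \operatorname{diag}(d_1, \dots, d_d)$, the inequalities expressing strict diagonal dominance of $DAD$ read $d_i^2 |A_{ii}| > \sum_{j \neq i} d_i |A_{ij}| d_j$; dividing through by $d_i > 0$, these are exactly the componentwise inequalities $(B\mathbf{d})_i > 0$ with $\mathbf{d} = (d_1, \dots, d_d)^{\top} > 0$. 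So the lemma becomes the assertion that a symmetric Z-matrix $B$ admits a strictly positive vector $\mathbf{d}$ with $B\mathbf{d} > 0$ componentwise if and only if $B > 0$ (positive definite).

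For the ``only if'' direction I would conjugate by $D = \operatorname{diag}(\mathbf{d})$: then $DBD$ is again Hermitian with non-positive off-diagonal entries, and its $i$-th row-sum inequality is precisely $d_i (B\mathbf{d})_i > 0$, so $DBD$ is strictly diagonally dominant, in particular with strictly positive diagonal. By Lemma~\ref{lemma dd positive}, $DBD$ is then positive semidefinite (Hermitian, non-negative diagonal, diagonally dominant) and also invertible (strictly diagonally dominant), hence positive definite, and therefore $B = D^{-1}(DBD)D^{-1} > 0$.

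For the ``if'' direction I would produce the witness $\mathbf{d}$ explicitly out of $B^{-1}$. Write $B = sI - N$ with $s := \lambda^{\max}(B)$. Then $N = sI - B$ has non-negative off-diagonal entries $-B_{ij}$ and non-negative diagonal entries $s - B_{ii}$ (a diagonal entry of a symmetric matrix never exceeds its largest eigenvalue), so $N \geq 0$ entrywise; its eigenvalues are $s - \lambda$ for $\lambda \in \operatorname{spec}(B)$, all lying in $[0,\, s - \lambda^{\min}(B)]$, so the spectral radius of $N$ equals $s - \lambda^{\min}(B) < s$, using $\lambda^{\min}(B) > 0$ from $B > 0$. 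The Neumann series then gives $B^{-1} = s^{-1} \sum_{n \geq 0} (N/s)^n \geq 0$ entrywise, and $B^{-1}$ is nonsingular, hence has no zero row. Therefore $\mathbf{d} := B^{-1} \mathbf{1}$, with $\mathbf{1}$ the all-ones vector, satisfies $\mathbf{d} > 0$ componentwise and $B\mathbf{d} = \mathbf{1} > 0$; taking $D = \operatorname{diag}(\mathbf{d})$ makes $DAD$ strictly diagonally dominant.

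I expect the ``if'' direction to be the crux, and within it the entrywise non-negativity of $B^{-1}$ — this is exactly the content of the standard characterisation of (symmetric) nonsingular M-matrices, and the Neumann-series computation above is what keeps the argument self-contained; the only delicate point there is to choose the shift $s$ at once large enough that $N \geq 0$ and small enough that the spectral radius of $N$ stays below $s$, which $s = \lambda^{\max}(B)$ achieves. Alternatively, the whole ``if/only if'' core can be replaced by a direct citation of M-matrix theory, namely the equivalence, for a symmetric Z-matrix $B$, of ``$B > 0$'', ``$B^{-1} \geq 0$'' and ``$\exists\, \mathbf{d} > 0$ with $B\mathbf{d} > 0$'', after which only the elementary translation of the first paragraph is needed.
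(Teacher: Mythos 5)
Your proof is correct, and it is genuinely different from what the paper does: the paper does not prove this lemma at all, but derives it by citation, invoking the equivalence of conditions (ii) and (iii) of Varga's Theorem~1 on generalized diagonal dominance (equivalently, the Horn--Johnson material on H-matrices) and then merely translating condition (iii) into the existence of $D>0$ with $DAD$ strictly diagonally dominant and condition (ii) into $M(A)>0$. Your first paragraph performs essentially that same translation step (dividing the dominance inequalities for $DAD$ by $d_i$ to get the componentwise condition $(M(A)\mathbf{d})_i>0$), but you then supply the missing core: the ``only if'' direction via congruence by $\operatorname{diag}(\mathbf{d})$ together with the paper's Lemma~\ref{lemma dd positive}, and the ``if'' direction via the splitting $M(A)=sI-N$ with $s=\lambda^{\max}(M(A))$, the entrywise non-negativity of $N$, the spectral-radius bound $s-\lambda^{\min}(M(A))<s$, and the Neumann series giving $M(A)^{-1}\geq 0$ entrywise, so that $\mathbf{d}=M(A)^{-1}\mathbf{1}$ is a strictly positive witness. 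All steps check out (including the observation that a nonsingular entrywise non-negative matrix has no zero row, so $\mathbf{d}>0$ strictly). What your route buys is a self-contained, elementary proof that replaces an external reference to M-matrix/H-matrix theory; what the paper's route buys is brevity and an explicit pointer to the general theory of which this lemma is a special (Hermitian) case.
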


As explained in \cite{Varga76} (see the discussion after Eq.~(3.8) there) the above lemma can be deduced from the equivalence of conditions (ii) and (iii) in \cite[Theorem~1]{Varga76}. In fact, it is not difficult to see that the `generalised column diagonal dominance' condition (iii) is equivalent to the existence of $D>0$ such that $DAD$ is strictly diagonally dominant, while condition (ii) translates directly to $M(A)>0$ because $M(A)$ is Hermitian and hence its eigenvalues are automatically real.

From Lemmas \ref{lemma dd positive} and \ref{lemma congruence dd} one can immediately deduce a quick criterion for positivity, also widely known.

\begin{corollary} \label{cor positivity comparison}
Given a $d\times d$ Hermitian matrix $A$ with non-negative diagonal entries, if $M(A)\geq 0$ then also $A\geq 0$.
\end{corollary}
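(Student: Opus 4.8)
The plan is to handle first the strictly positive case $M(A)>0$ by directly composing the two preceding lemmas, and then to bootstrap to the general semidefinite case $M(A)\geq 0$ by a simple perturbation argument.

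\textbf{Step 1 (strict case).} Assume $M(A)>0$. First I would note that strict positivity forces each diagonal entry $M(A)_{ii}=|A_{ii}|$ to be strictly positive, which together with the standing hypothesis $A_{ii}\geq 0$ gives $A_{ii}>0$. By Lemma~\ref{lemma congruence dd} there is a diagonal $D>0$ such that $DAD$ is strictly diagonally dominant. Since $(DAD)_{ij}=D_{ii}A_{ij}D_{jj}$, the matrix $DAD$ is Hermitian, diagonally dominant, and has strictly positive diagonal $D_{ii}^2 A_{ii}$, so Lemma~\ref{lemma dd positive} gives $DAD\geq 0$; conjugating by the invertible matrix $D^{-1}$ then yields $A=D^{-1}(DAD)D^{-1}\geq 0$.

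\textbf{Step 2 (semidefinite case).} For general $M(A)\geq 0$, I would perturb: set $A_\varepsilon=A+\varepsilon\I$ with $\varepsilon>0$. Using $A_{ii}\geq 0$ one checks that the comparison-matrix construction behaves linearly under this shift, $M(A_\varepsilon)=M(A)+\varepsilon\I>0$, while $A_\varepsilon$ retains a non-negative (indeed positive) diagonal. Step 1 then gives $A_\varepsilon\geq 0$ for all $\varepsilon>0$, and letting $\varepsilon\to 0^+$, together with closedness of the positive-semidefinite cone, gives $A\geq 0$.

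The only non-mechanical point — and thus the step I would flag as the main obstacle, modest though it is — is the interplay between the perturbation and the comparison-matrix construction in Step 2: one must use the hypothesis $A_{ii}\geq 0$ so that $|A_{ii}+\varepsilon|=|A_{ii}|+\varepsilon$ and hence $M(A+\varepsilon\I)=M(A)+\varepsilon\I$ still dominates $\varepsilon\I$; if some diagonal entry of $A$ were negative, adding $\varepsilon\I$ could lower the corresponding diagonal entry of the comparison matrix and the argument would fail. The remaining identity $M(DAD)=D\,M(A)\,D$ for $D>0$ used implicitly in Step 1 is an immediate entrywise computation.
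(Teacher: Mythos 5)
Your proposal is correct and follows essentially the same route as the paper's proof: both apply Lemma~\ref{lemma congruence dd} and Lemma~\ref{lemma dd positive} to the perturbed matrix $A_\epsilon = A+\epsilon\I$, using $A_{ii}\geq 0$ to get $M(A_\epsilon)=M(A)+\epsilon\I>0$, and then take $\epsilon\to 0^+$. Your explicit remark on why the non-negative diagonal is needed for the perturbation step is a point the paper leaves implicit, but the argument is the same.
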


\begin{proof}
Let $A$ satisfy the hypotheses. For all $\epsilon>0$ set $A_\epsilon \coloneqq A + \epsilon \I$. Then $M(A_\epsilon)=M(A)+ \epsilon \I >0$, and by Lemma~\ref{lemma congruence dd} there exists $D_\epsilon >0$ diagonal such that $D_\epsilon A_\epsilon D_\epsilon$ is strictly diagonally dominant and hence positive definite by Lemma~\ref{lemma dd positive}. Since $D_\epsilon$ is invertible, we deduce that also $A_\epsilon$ is positive definite, i.e.\ $A_\epsilon > 0$. Upon taking the limit $\epsilon\to 0^+$ we obtain that $A\geq 0$.
\end{proof}

Another notable corollary is a Hermitian version of a well-known theorem by Camion and Hoffman \cite{Camion1966} (see also \cite[Theorem~2.5.14]{HJ2}). For a fixed dimension $d$, let us construct the following set of $d\times d$ matrices:
\begin{equation}
\Omega \coloneqq \left\{ \omega=\omega^\dag:\ |\omega_{ij}|=1=\omega_{ii}\quad \forall\ i,j \right\} .
\end{equation}
Observe that $\Omega \subseteq C_2^*$, as one can verify directly by employing Theorem~2 from the main text.
In what follows, we will find it convenient to work with \emph{Hadamard products}~\cite[Definition~7.5.1]{HJ1}. We remind the reader that the Hadamard product of two $d\times d$ matrices $A,B$ is another $d\times d$ matrix $A\circ B$ whose entries are simply defined by $(A\circ B)_{ij}\coloneqq A_{ij}B_{ij}$. For a matrix $A$, we set
\begin{equation}
\Omega \circ A \coloneqq \left\{ \omega \circ A:\ \omega \in \Omega \right\} .
\end{equation}
We now recall the following result. Although it is part of the folklore on the subject, we include a proof for completeness.

\begin{corollary} \label{cor equimodular}
A $d\times d$ Hermitian matrix $A$ with non-negative diagonal entries is such that $\Omega\circ A$ comprises only positive semidefinite matrices if and only if $M(A)\geq 0$.
\end{corollary}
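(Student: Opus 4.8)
The plan is to prove both directions by relating the Hadamard orbit $\Omega\circ A$ to the single matrix $M(A)$, via two elementary observations. The first is that passing to comparison matrices is invariant under this orbit: for any $\omega\in\Omega$ one has $(\omega\circ A)_{ii}=\omega_{ii}A_{ii}=A_{ii}$ and $|(\omega\circ A)_{ij}|=|\omega_{ij}|\,|A_{ij}|=|A_{ij}|$, so that $M(\omega\circ A)=M(A)$; moreover $\omega\circ A$ is again Hermitian (the Hadamard product of two Hermitian matrices is Hermitian) and has the same diagonal as $A$, hence non-negative diagonal entries. The second observation is that $M(A)$ itself belongs to $\Omega\circ A$: define $\omega$ by $\omega_{ii}=1$ and, for $i\neq j$, by $\omega_{ij}=-\overline{A_{ij}}/|A_{ij}|$ when $A_{ij}\neq 0$ and $\omega_{ij}=1$ otherwise. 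One then checks that $\omega=\omega^\dagger$ and $|\omega_{ij}|=1$, so $\omega\in\Omega$, and that $\omega\circ A=M(A)$ entrywise, where the diagonal matches precisely because $A_{ii}\geq 0$ gives $|A_{ii}|=A_{ii}$.

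Granting these, the ``if'' direction is immediate: if $M(A)\geq 0$, then for every $\omega\in\Omega$ the matrix $\omega\circ A$ is Hermitian with non-negative diagonal and satisfies $M(\omega\circ A)=M(A)\geq 0$, so Corollary~\ref{cor positivity comparison} yields $\omega\circ A\geq 0$. The ``only if'' direction is equally short: if every element of $\Omega\circ A$ is positive semidefinite, then applying this to the specific $\omega$ constructed in the second observation gives $M(A)=\omega\circ A\geq 0$. This mirrors the structure of the classical Camion--Hoffman argument, now in the Hermitian setting.

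I do not expect a genuine obstacle; the only point requiring care is verifying that the equimodular matrix $\omega$ exhibiting $M(A)\in\Omega\circ A$ is Hermitian. For entries with $A_{ij}\neq 0$ this uses $\overline{A_{ji}}=A_{ij}$ and $|A_{ji}|=|A_{ij}|$, both valid since $A$ is Hermitian, so that $\overline{\omega_{ij}}=-A_{ij}/|A_{ij}|=-\overline{A_{ji}}/|A_{ji}|=\omega_{ji}$; for entries with $A_{ij}=0$ (hence $A_{ji}=0$) the choice $\omega_{ij}=\omega_{ji}=1$ is trivially self-conjugate. Together with the bookkeeping that $(\omega\circ A)_{ij}=\omega_{ij}A_{ij}=-|A_{ij}|$ for off-diagonal entries (whether or not $A_{ij}$ vanishes) and $(\omega\circ A)_{ii}=A_{ii}=|A_{ii}|$ on the diagonal, this establishes $\omega\circ A=M(A)$ and completes the argument.
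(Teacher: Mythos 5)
Your proof is correct and follows essentially the same route as the paper's: both directions reduce to the two observations that $M(\omega\circ A)=M(A)$ for all $\omega\in\Omega$ (combined with Corollary~\ref{cor positivity comparison}) and that $M(A)$ itself lies in $\Omega\circ A$. You simply spell out the explicit equimodular Hermitian $\omega$ realizing $M(A)$, which the paper leaves implicit.
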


\begin{proof}
Since $M(A)\in \Omega \circ A$, clearly $M(A)\geq 0$ is necessary to ensure that $\Omega \circ A$ is composed only of positive semidefinite matrices. To show the converse, observe that for all $\omega\in \Omega$ we have that $M(\omega \circ A)=M(A)$. By virtue of Corollary~\ref{cor positivity comparison}, it is then clear that $M(A)\geq 0$ is also a sufficient condition.
\end{proof}

We are now ready to prove Theorem~1 from the main text, which states that the condition $M(\rho)\geq 0$ is necessary and sufficient in order for $\rho$ to have coherence number at most $2$.

\begin{proof}[Proof of Theorem~1]
We start by showing that $M(\rho)\geq 0$ is sufficient to ensure that $\CN (\rho)\leq 2$. For $0<\epsilon\leq 1$, set $\rho_\epsilon \coloneqq (1-\epsilon)\rho + \epsilon \frac{\I}{d}$. As in the proof of Corollary~\ref{cor positivity comparison}, we have $M(\rho_\epsilon)>0$. Hence, by Lemma~\ref{lemma congruence dd} there exists $D_\epsilon>0$ such that $D_\epsilon \rho_\epsilon D_\epsilon$ is strictly diagonally dominant. Since it is easy to see that diagonally dominant density matrices have coherence number at most $2$ \cite{Johnston18}, and moreover the coherence number is invariant under congruence by invertible diagonal matrices, we obtain that
\begin{equation*}
\CN(\rho_\epsilon) = \CN \left(\frac{D_\epsilon \rho_\epsilon D_\epsilon}{\Tr(D_\epsilon \rho_\epsilon D_\epsilon)}\right)\leq 2\, .
\end{equation*}
Taking the limit $\epsilon\to 0^+$ and using the fact that $\CN$ is lower semicontinuous we see that in fact also
\begin{equation*}
\CN (\rho) = \CN \left( \lim_{\epsilon\to 0^+} \rho_\epsilon\right) \leq \lim_{\epsilon\to 0^+} \CN (\rho_\epsilon) \leq 2\, .
\end{equation*}

We now show the converse, i.e.\ that every density matrix $\rho$ such that $\CN (\rho)\leq 2$ satisfies $M(\rho)\geq 0$. To this end, we prove that $\CN (\rho)\leq 2$ implies that $\Omega \circ \rho$ is composed only of positive semidefinite matrices, and then the claim will follow from Corollary~\ref{cor equimodular}. If $\rho = \sum_{\alpha} p_{\alpha} \psi_{\alpha}$ is a decomposition of $\rho$ such that every $\ket{\psi_{\alpha}}$ has coherence rank at most $2$, for all $\omega\in \Omega$ we have
\begin{equation}
\omega \circ \rho = \sum_{\alpha} p_{\alpha}\, \omega \circ \psi_{\alpha} \geq 0\, ,
\end{equation}
where the last inequality follows because if $A$ is nonzero only on a $2\times 2$ subspace then $\omega \circ A = U A U^\dag$ for some diagonal unitary $U$.
\end{proof}

From the above result we can deduce an easy criterion that allows a complete classification of qutrit states based on their coherence number. Notice that it is trivial to check whether $\CN(\rho)=1$, since in the latter case $\rho$ is simply diagonal, so the non-trivial part of the classification is in distinguishing between the case $\CN(\rho)\leq 2$ and the case $\CN(\rho)=3$.

\begin{corollary}
Let $\rho$ be a qutrit state. If $\det M(\rho)\geq 0$ then $\CN(\rho)\leq 2$, otherwise $\CN(\rho)=3$.
\end{corollary}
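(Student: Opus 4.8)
The plan is to reduce the corollary to Theorem~1 together with an elementary fact about $3\times 3$ comparison matrices that arise from density operators. By Theorem~1 we already know that $\CN(\rho)\leq 2$ precisely when $M(\rho)\geq 0$; and since a qutrit state necessarily has $\CN(\rho)\in\{1,2,3\}$, the failure of $M(\rho)\geq 0$ is equivalent to $\CN(\rho)=3$. Hence it suffices to prove that, for a qutrit density matrix $\rho$, one has $M(\rho)\geq 0$ if and only if $\det M(\rho)\geq 0$.

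First I would invoke the standard criterion that a Hermitian matrix is positive semidefinite if and only if \emph{all} of its principal minors (not just the leading ones) are non-negative. For $M(\rho)$ the $1\times 1$ principal minors are the diagonal entries $\rho_{ii}=|\rho_{ii}|\geq 0$, which hold simply because $\rho$ is a state. The $2\times 2$ principal minor of $M(\rho)$ indexed by $\{i,j\}$ is $\det\begin{pmatrix}\rho_{ii}&-|\rho_{ij}|\\-|\rho_{ij}|&\rho_{jj}\end{pmatrix}=\rho_{ii}\rho_{jj}-|\rho_{ij}|^2$, which coincides with the $2\times 2$ principal minor of $\rho$ itself on $\{i,j\}$; this is non-negative because $\rho\geq 0$. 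Therefore, among the principal-minor inequalities in the positive-semidefiniteness test for $M(\rho)$, all are automatically satisfied except possibly the full $3\times 3$ one, namely $\det M(\rho)\geq 0$.

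Putting these observations together yields $M(\rho)\geq 0 \iff \det M(\rho)\geq 0$, and substituting into Theorem~1 gives exactly the claimed dichotomy. I do not anticipate a genuine obstacle here; the only points requiring a little care are to use the correct positive-semidefiniteness test — all principal minors, rather than only the leading ones — and to notice that the non-negativity of the lower-order minors is a free consequence of $\rho$ being a bona fide density matrix, so that the sole genuinely new constraint encoded in $M(\rho)\geq 0$ is $\det M(\rho)\geq 0$.
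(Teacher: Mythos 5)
Your proposal is correct and follows essentially the same route as the paper's own proof: both reduce the claim to the observation that the $1\times 1$ and $2\times 2$ principal minors of $M(\rho)$ coincide with (hence inherit non-negativity from) those of the density matrix $\rho$, so that positive semidefiniteness of $M(\rho)$ hinges only on $\det M(\rho)\geq 0$, after which Theorem~1 gives the dichotomy. Your explicit caution about using the all-principal-minors version of the positivity test (rather than leading minors only) is a welcome precision that the paper glosses over by invoking ``Sylvester's criterion'' loosely.
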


\begin{proof}
Since all $2\times 2$ principal minors of $M(\rho)$ have the same determinant as the corresponding minor of $\rho$, hence non-negative, Sylverster's criterion ensures that $M(\rho)\geq 0$ if and only if $\det M(\rho)\geq 0$. Theorem~1 then ensures that this latter condition is necessary and sufficient in order for $\rho$ to have coherence number at most $2$.
\end{proof}

In what follows, we employ Theorem~1 to deduce the condition Eq.~(8) of the main text.

\begin{corollary} \label{cor largest ball}
For all $d\times d$ density matrices $\rho$,
\begin{equation}
\Tr \left(\rho^2\right)\leq \frac{1}{d-1}\quad \Longrightarrow\quad \CN(\rho)\leq 2\, .
\label{largest ball}
\end{equation}
\end{corollary}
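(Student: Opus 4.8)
The plan is to reduce the claim to a purity-to-positivity statement via Theorem~\ref{thm comparison CN 2}, and then dispatch that statement with Cauchy--Schwarz together with one elementary inequality about probability vectors. By Theorem~\ref{thm comparison CN 2}, $\CN(\rho)\le 2$ is equivalent to $M(\rho)\ge 0$, so it suffices to prove that $\Tr(\rho^2)\le 1/(d-1)$ forces the comparison matrix to be positive semidefinite. Now $M(\rho)$ is a real symmetric matrix with non-negative diagonal $M(\rho)_{ii}=\rho_{ii}$ and off-diagonal entries $-|\rho_{ij}|$; moreover, for any real vector $v$ one checks at once that $\langle v, M(\rho) v\rangle \ge \langle |v|, M(\rho) |v|\rangle$, where $|v|$ denotes the entrywise modulus, because replacing $v_iv_j$ by $|v_i||v_j|$ only decreases $-|\rho_{ij}|v_iv_j$. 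Hence $M(\rho)\ge 0$ as soon as $\langle w, M(\rho) w\rangle\ge 0$ for every \emph{non-negative} unit vector $w$; the case $w=\ket{i}$ is trivial since then $\langle w,M(\rho)w\rangle=\rho_{ii}\ge 0$, so we may assume $w$ is not a coordinate vector, in particular $\sum_i w_i^4<1$.

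For such a $w$, write $\langle w,M(\rho)w\rangle=\sum_i\rho_{ii}w_i^2-\sum_{i\neq j}|\rho_{ij}|\,w_iw_j$. Cauchy--Schwarz over the pairs $(i,j)$ with $i\neq j$ gives $\sum_{i\neq j}|\rho_{ij}|w_iw_j\le\big(\sum_{i\neq j}|\rho_{ij}|^2\big)^{1/2}\big(\sum_{i\neq j}w_i^2w_j^2\big)^{1/2}$, which, since $\sum_{ij}|\rho_{ij}|^2=\Tr(\rho^2)$ (as $\rho=\rho^\dagger$) and $\sum_i w_i^2=1$, equals $\big(\Tr(\rho^2)-\sum_i\rho_{ii}^2\big)^{1/2}\big(1-\sum_i w_i^4\big)^{1/2}$. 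Therefore $\langle w,M(\rho)w\rangle\ge 0$ will follow provided
\[
\Tr(\rho^2)\ \le\ \sum_i\rho_{ii}^2+\frac{\big(\sum_i\rho_{ii}w_i^2\big)^2}{1-\sum_i w_i^4}\, .
\]
Since $\Tr(\rho^2)\le 1/(d-1)$ by hypothesis, it is enough to show the right-hand side is always at least $1/(d-1)$; writing $a_i:=\rho_{ii}$ and $u_i:=w_i^2$, which are probability vectors with $\sum_i u_i^2<1$, this is exactly the elementary lemma
\[
\sum_i a_i^2+\frac{\big(\sum_i a_iu_i\big)^2}{1-\sum_i u_i^2}\ \ge\ \frac{1}{d-1}\, .
\]

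To prove the lemma I would fix $u$ and regard the left-hand side as the quadratic form $f(a)=a^{T}Qa$ with $Q=\I+\tfrac{1}{1-\|u\|^2}\,uu^{T}\succ 0$, and minimise it over the affine set $\{a:\mathbf 1^{T}a=1\}$ (it suffices to work on this larger set, which only lowers the minimum). The minimum of $a^{T}Qa$ over $\{\mathbf 1^{T}a=1\}$ equals $(\mathbf 1^{T}Q^{-1}\mathbf 1)^{-1}$; by the Sherman--Morrison formula together with the identity $1+\tfrac{\|u\|^2}{1-\|u\|^2}=\tfrac{1}{1-\|u\|^2}$, one obtains the clean inverse $Q^{-1}=\I-uu^{T}$, so $\mathbf 1^{T}Q^{-1}\mathbf 1=d-(\mathbf 1^{T}u)^2=d-1$ and the minimum is exactly $1/(d-1)$. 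Running the chain backwards yields $\langle w,M(\rho)w\rangle\ge 0$ for all non-negative $w$, hence $M(\rho)\ge 0$ and $\CN(\rho)\le 2$.

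The only genuinely nontrivial point is the elementary lemma, and even there the obstacle is mostly one of packaging: a direct Lagrange-multiplier attack on $f$ over the simplex works but is messy, whereas recognising $f$ as a quadratic form and invoking the closed form $(\mathbf 1^{T}Q^{-1}\mathbf 1)^{-1}$ plus Sherman--Morrison makes it a two-line computation. It is reassuring that every inequality used is sharp: the lemma is tight at $a=\tfrac{1}{d-1}(\mathbf 1-u)$, and on the matrix side the threshold $1/(d-1)$ is attained by $\rho=(\I-\proj{\psi})/(d-1)$, consistent with the tightness already asserted in the main text.
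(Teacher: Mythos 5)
Your proof is correct, but it takes a genuinely different route from the paper's. Both arguments reduce the claim, via Theorem~\ref{thm comparison CN 2}, to showing that low purity forces $M(\rho)\geq 0$; the paper then finishes by contraposition and a spectral argument: since $\Tr M(\rho)=\Tr\rho=1$ and $\Tr\left(M(\rho)^2\right)=\Tr(\rho^2)$, a negative eigenvalue of $M(\rho)$ would force the remaining eigenvalues to sum to some $c>1$, whence $\Tr(\rho^2)=\sum_i\lambda_i^2\geq c^2/(d-1)>1/(d-1)$ --- essentially a three-line argument. You instead work directly on the quadratic form: you use the sign pattern of $M(\rho)$ (real symmetric, non-positive off-diagonal) to restrict the positivity test to non-negative vectors, bound the off-diagonal contribution by Cauchy--Schwarz, and reduce to a simplex lemma that Sherman--Morrison solves exactly. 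All steps check out: the passage from complex to real to non-negative test vectors is valid, $Q=\I+uu^{T}/(1-\|u\|^2)\succ 0$ because $\sum_i u_i^2<1$ off the coordinate vectors, and dropping the constraint $a\geq 0$ only lowers the constrained minimum, which is the direction you need. Your route is longer, but it buys an explicit view of where the constant $1/(d-1)$ comes from and where every inequality saturates (your extremiser $a=\tfrac{1}{d-1}(\mathbf{1}-u)$ matches the boundary states $\rho=(\I-\proj{\psi})/(d-1)$ cited in the main text), and the intermediate per-test-vector bound $\Tr(\rho^2)\leq\sum_i\rho_{ii}^2+\left(\sum_i\rho_{ii}w_i^2\right)^2/\left(1-\sum_i w_i^4\right)$ is a refinement that could be exploited when the diagonal of $\rho$ is known. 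The paper's argument is preferable for economy; yours is preferable for transparency about tightness.
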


\begin{proof}
We show that if $\CN(\rho)>2$ then necessarily $\Tr\left(\rho^2\right)>\frac{1}{d-1}$. By Theorem~1, $\CN(\rho)>2$ implies that the comparison matrix $M(\rho)$ is not positive semidefinite.
Observe that $\Tr (M(\rho)) = \Tr (\rho) =1$, and moreover $\Tr \left( M(\rho)^2\right) = \Tr (\rho^2)$. The first equation tells us that since one of the eigenvalues of $M(\rho)$ is negative, the sum of the remaining $d-1$ must be larger than $1$. For fixed $\sum_{i=1}^{d-1} \lambda_i = c > 1$, the quantity $\sum_{i=1}^{d-1}\lambda_i^2$ is well-known to be minimised when $\lambda_i\equiv \frac{c}{d-1}$ for all $i$. Hence, in this case we would obtain $\Tr\left(\rho^2\right) = \Tr \left(M(\rho)^2\right)\geq \frac{c^2}{d-1}>\frac{1}{d-1}$. This concludes the proof.
\end{proof}

\begin{remark}
Observe that the condition on the l.h.s. of \eqref{largest ball} is equivalent to the condition
\[
\left\|\, \rho - \frac{\I}{d}\,\right\|_2 \leq \frac{1}{\sqrt{d(d-1)}},
\]
that is, to the condition that $\rho$ is close enough to the maximally mixed state $\I/d$ in the Hilbert-Schmidt norm $\|X\|_2=\sqrt{\Tr(X^\dagger X)}$.
Indeed, the above result provides the size of the \emph{maximal} Euclidean ball centered around the maximally mixed state that lies inside the set $C_2$ of $2$-coherent states. There can not be any larger such ball, as the one we constructed already touches the boundary of the set of $d\times d$ density matrices. This is because it contains some rank-deficient states, e.g.\ all normalised projectors onto $(d-1)$-dimensional subspaces.
\end{remark}}

\subsection{Robustness of $k{+}1$ coherence.} It is possible to write $R_{C_{k}}(\rho)$ as the solution of the following SDP:
\begin{equation}
\label{Eq:RobustnessPrimal}
\begin{array}{lll}
R_{{C}_k} (\rho) \,\,\, = \,\,\, & \min  \qquad & \Tr(\sum_{I\in\mathcal{P}_k}\tilde{\sigma}_{I}) -1 \\[6pt]
& \textup{s.t.} & \tilde{\sigma}_{I} \geq 0 \quad\forall I\in \mathcal{P}_k\\[6pt]
&  & P_{I}\tilde{\sigma}_{I}P_I = \tilde{\sigma}_I \quad\forall I\in \mathcal{P}_k\\[6pt]
&  & \sum_{I\in\mathcal{P}_k} \tilde{\sigma}_{I} \geq \rho \, . \\
\end{array}
\end{equation}
The dual SDP is given by Eq.~(11) in the main text. We now show that Eq.~\eqref{Eq:RobustnessPrimal} holds. First, one may rewrite $R_{C_{k}}(\rho)$ as
\begin{equation}\label{Eq:RMCAlternative}
R_{C_{k}}(\rho) = \inf_{\sigma \in C_{k}} \left\lbrace s \geq 0 : \rho \leq (1+s)\sigma \right\rbrace.
\end{equation}
One then arrives to Eq.~\eqref{Eq:RobustnessPrimal} by using the defying property of $C_k$, that is, that $\sigma \in C_{k}$ can be written as the convex combination of pure states with coherence rank at most $k$. Thus, for any $\tilde{\sigma} \geq 0$, we have that $\frac{\tilde{\sigma}}{\mbox{Tr}(\tilde{\sigma})} \in C_{k}$ if and only if $\tilde{\sigma} = \sum_{I \in \mathcal{P}_{k}} \tilde{\sigma}_{I}$, such that for all $I \in \mathcal{P}_{k}$ it holds that $P_{I}\tilde{\sigma}_{I}P_{I} = \tilde{\sigma}_{I}$ and $\tilde{\sigma}_{I} \geq 0$.

We note that strong duality holds trivially since $\mathbb{I}\in C_{k}^{\star}$.

{\bf Robustness of multilevel coherence of the noisy maximally coherent states.}
It is simple to check that for the witness $W_{k}(\psi_{d}^{+})$ in Eq.~(9) in the main text we have $\Tr(W_k(\psi_d^+) \rho(p)) = \frac{1}{k}[(k-1)-p(d-1)]$ and $W_{k}(\psi_{d}^{+}) \leq \mathbb{I}$.
One then concludes that $R_{C_{k}}(\rho (p)) \geq \max \left\lbrace 0 ,  - \Tr(W_k(\psi_d^+)\rho (p))  \right\rbrace = \max \left\lbrace 0 , \frac{1}{k}[p(d-1)-(k-1)] \right\rbrace$. On the other hand, it can be seen that $\rho(p) \leq (1+ s(\rho(p))) \Delta_{k}(\ket{\psi_{d}^{+}}\bra{\psi_{d}^{+}})$ for $s(\rho(p)) = \max \left\lbrace 0 , \frac{1}{k}[p(d-1)-(k-1)] \right\rbrace$. Then, from Eq.~\eqref{Eq:RMCAlternative} we see that $R_{C_{k}}(\rho (p)) \leq s(\rho (p))$ and can hence conclude that
\begin{equation}\label{Eq:RCkNoisyMaximallyCoherent}
R_{C_{k}}(\rho (p)) = \max \left\lbrace 0 , \frac{1}{k}[p(d-1)-(k-1)] \right\rbrace.
\end{equation}

\change{
\subsection{Bounding RMC from one measurement via witnesses.}
For a given observable $O$ and corresponding expectation value $\Tr(O\rho)$, we consider witnesses of the form $W = \alpha \I + \beta O$, with $\alpha,\beta$ real coefficients. The idea is that, since we focus on a subset of all possible witnesses, we will bound the robustness of multilevel coherence exploiting Eq.~(11), via
\begin{equation}
\begin{array}{lll}
R_{{C}_k} (\rho) \,\,\, \geq \,\,\, & \max  \qquad & - \Tr((\alpha \I + \beta O)\rho)) \\[6pt]
& \textup{s.t.} & \Tr((\alpha \I + \beta O)\sigma) \geq 0 \quad \forall \sigma\in C_k\\[6pt]
&  & \alpha \I + \beta O \leq \I  \\[6pt]
&  & \alpha,\beta\in \mathbb{R}\, .
\end{array}
\end{equation}
This bound simplifies to
\begin{equation}
\label{eq:singleobservablebound}
\begin{array}{lll}
R_{{C}_k} (\rho) \,\,\, \geq \,\,\, & \max  \qquad & - (\alpha + \beta\Tr(\rho O)) \\[6pt]
& \textup{s.t.} & \alpha+\beta\braket{\phi | O | \phi} \geq 0 \quad \forall\ket{\phi}\quad\text{s.t.}\quad \CR(\ket{\phi})\leq k\\[6pt]
&  & \alpha + \beta \braket{\psi | O | \psi} \leq 1 \quad\forall \ket{\psi} \\[6pt]
&  & \alpha,\beta\in \mathbb{R}\, .
\end{array}
\end{equation}
For the convenience of the reader, we restate some of the definitions given in the main text. We define the $k$-coherence numerical range of $O$ as the interval $\NR_{C_k}(O) = \{\Tr(O\sigma_{C_k}):\sigma\in C_k\}$, and define its extreme points  $\lambda_{C_k}^{\min}(O)=\min \NR_{C_k}(O)$ and $\lambda_{C_k}^{\max}(O)=\max \NR_{C_k}(O)$. Notice that $\lambda^{\min}_{C_k}(O) = \min_{\CR(\ket{\psi})\leq k} \braket{ \psi | O |\psi} = \min_{I\in \mathcal{P}_k}\lambda^{\min} (P_I O P_I)$ (similarly for $\lambda^{\max}_{C_k}(O)$). Notice also that $\NR_{C_d}(O)$ is the standard numerical range of $O$, and $\lambda_{C_d}^{\min}(O)=\lambda^{\min}(O)$, where $\lambda^{\min}(X)$ is the smallest eigenvalues of $X=X^\dagger$ (similarly for the maximal values).
It is convenient to split the optimization \eqref{eq:singleobservablebound} into the two cases $\beta<0$ and $\beta\geq 0$.
If we optimize over $\beta<0$, the bound assumes the form
\begin{equation}
\begin{array}{lll}
R_{{C}_k} (\rho) \,\,\, \geq \,\,\, & \sup  \qquad & - (\alpha + \beta\Tr(\rho O)) \\[6pt]
& \textup{s.t.} & \alpha+\beta \lambda_{C_k}^{\max}(O) \geq 0\\[6pt]
&  & \alpha + \beta \lambda^{\min}(O) \leq 1 \\[6pt]
&  & \alpha, \beta\in \mathbb{R}, \quad \beta<0,
\end{array}
\end{equation}
while, optimizing over $\beta \geq 0$, we have
\begin{equation}
\begin{array}{lll}
R_{{C}_k} (\rho) \,\,\, \geq \,\,\, & \max  \qquad & - (\alpha + \beta\Tr(\rho O)) \\[6pt]
& \textup{s.t.} & \alpha+\beta \lambda_{C_k}^{\min}(O) \geq 0\\[6pt]
&  & \alpha + \beta \lambda^{\max}(O) \leq 1 \\[6pt]
&  & \alpha, \beta\in \mathbb{R}, \quad \beta \geq 0.
\end{array}
\end{equation}
These two separate optimizations are easily handled. We present the details for the one for $\beta<0$; the one for $\beta\geq0$ is handled similarly. For $\beta<0$, we want to take $\alpha$ as small as possible; at the same time $\alpha$ must satisfy
\begin{equation*}
- \beta \lambda_{C_k}^{\max}(O) \leq \alpha \leq 1-\beta \lambda^{\min}(O).
\end{equation*}
Thus, the optimization is feasible if
\begin{equation*}
- \beta \lambda_{C_k}^{\max}(O) \leq 1-\beta \lambda^{\min}(O) \Leftrightarrow  \beta \geq -\frac{1}{\lambda_{C_k}^{\max}(O)-\lambda^{\min}(O)}.
\end{equation*}
Notice that $\lambda_{C_k}^{\max}(O) \geq (k/d)\lambda^{\max}(O)+ ((d-k)/d)\lambda^{\min}(O)$ (proven along the lines of Eq.~\eqref{eq:boundW}), so that the denominator in the last expression is strictly positive for all $k\geq 1$ as long as $O$ is not fully degenerate (in the latter case measuring $O$ clearly can not provide any information).
If $\beta<0$ is in the feasible region, we want to take $\alpha=- \beta \lambda_{C_k}^{\max}(O)$, so that the target value is $-\beta(\Tr(O\rho) - \lambda_{C_k}^{\max}(O))$. If $\Tr(O\rho) - \lambda_{C_k}^{\max}(O) \geq 0$, the largest value is obtained by choosing $\beta$ as negative as possible, that is $\beta = -1/(\lambda_{C_k}^{\max}(O)-\lambda^{\min}(O))$, so that the optimal value is $(\Tr(O\rho)-\lambda^{\max}_{C_k}(O))/(\lambda^{\max}_{C_k}(O)-\lambda^{\min}(O))$; otherwise, if $\Tr(O\rho) - \lambda_{C_k}^{\max}(O) \geq 0$, we take $\beta\rightarrow 0^-$, with optimal value $0$.

Thus, considering also the case $\beta \geq 0$, we arrive at the bound of Eq.~(12) of the main text:
\begin{equation}
R_{C_k}(\rho)
\geq \max\left\{0,\frac{\Tr(O\rho)-\lambda^{\max}_{C_k}(O)}{\lambda^{\max}_{C_k}(O)-\lambda^{\min}(O)},\frac{\lambda^{\min}_{C_k}(O)-\Tr(O\rho)}{\lambda^{\max}(O)-\lambda^{\min}_{C_k}(O)}\right\},
\end{equation}
which is non-trivial when $\Tr(O\rho)\neq \NR_{C_k}(O)$, that is, when the expectation value of $O$ is not compatible with $\rho$ being in $C_k$. Notice that a similar approach is possible in quantifying other resources, like entanglement. One feature that makes multilevel coherence special is that the multilevel-coherence numerical range can be explicitly calculated.}

\subsection{Bounding RMC from arbitrary measurements.}
If one has access to the expectation values $o_{i} = \mbox{Tr}(O_{i}\rho)$ of a set of $n$ observables $\{O_{i}\}_{i=1}^{n}$ measured on an experimentally prepared state $\rho$, one can lower bound the RMC using the SDP:
\begin{equation}\label{Eq:RobustnessObservablesPhys}
\hspace*{-.2cm}\begin{array}{lll}
R_{{C}_k} (\rho) \geq & \min & \mbox{Tr}(\sum_{I\in\mathcal{P}_k}\tilde{\sigma}_{I})-1 \\[6pt]
& \textup{s.t.} & \tilde{\sigma}_{I} \geq 0 \quad \forall I \in \mathcal{P}_{k} \\[6pt]
&  & P_{I}\tilde{\sigma}_{I}P_I = \tilde{\sigma}_I \quad \forall I \in \mathcal{P}_{k} \\[6pt]
&  & \sum_{I\in\mathcal{P}_k} \tilde{\sigma}_{I} \geq \tau \\[6pt]
& & \tau \geq 0 \\[6pt]
& & \mbox{Tr}(\tau) = 1 \\[6pt]
& & o_{i} - e_{i}^- \leq \mbox{Tr}(O_{i}\tau ) \leq  o_{i} + e_{i}^+ \quad \forall i\, , \\[6pt]
\end{array}
\end{equation}
where we allow for the lower and upper experimental uncertainties $e_{i}^-$ and $e_{i}^+$, respectively. Here we look for an optimal $\tau \in \mathcal{D(H)}$ that satisfies $\sum_{I\in\mathcal{P}_k} \tilde{\sigma}_{I} \geq \tau$ while being consistent with the results of the expectation values, that is $\mbox{Tr}( O_{i}\tau )=\mbox{Tr}(O_{i}\rho)$, within experimental uncertainties.

Note that the SDP in Eq.~\eqref{Eq:RobustnessObservablesPhys} only requires the optimization to reproduce the measured expectation values to within the supplied error bounds. This leads to a trade-off, where smaller error bounds to the SDP lead to closer convergence to the actual value of multilevel coherence, while larger error bounds improve the stability of the estimation against statistical fluctuations. In the experiment presented in Fig.~5 of the main text, we have chosen conservative $5\sigma$ error bounds, leading to a deviation about $5\%$ between the lower bound and the tomographically estimated RMC. This could be improved by incorporating maximum-likelihood or Bayesian estimation techniques as in the case of quantum tomography.

\subsection{Phase discrimination.} Consider the optimal $\sigma^{\star} \in C_{k}$ satisfying the optimisation in Eq.~\eqref{Eq:RMCAlternative} for a given state $\rho$, i.e.\ such that $\rho \leq (1 + R_{C_{k}}(\rho))\sigma^{\star}$. Following from the linearity of the success probability in Eq.~(13) in the main text, we see that
\begin{equation}
p_{\rm{succ}}^{\Theta}(\rho) \leq (1 + R_{C_{k}}(\rho))p_{\rm{succ}}^{\Theta}(\sigma^{\star}).
\end{equation}
Now, by setting $k=1$, one may also consider the optimal $\delta^{\star} \in C_{1}$ such that $\sigma^{\star} \leq (1 + R_{C_{1}}(\sigma^{\star}))\delta^{\star}$, which means
\begin{equation}
p_{\rm{succ}}^{\Theta}(\sigma^{\star}) \leq (1 + R_{C_{1}}(\sigma^{\star}))p_{\rm{succ}}^{\Theta}(\delta^{\star}).
\end{equation}
Overall then, we have
\begin{equation}
p_{\rm{succ}}^{\Theta}(\rho) \leq (1 + R_{C_{k}}(\rho))(1 + R_{C_{1}}(\sigma^{\star}))p_{\rm{succ}}^{\Theta}(\delta^{\star}).
\end{equation}
Since $\sigma^{\star} \in C_{k}$, we find that $R_{C_{1}}(\sigma^{\star})\leq k-1$. Furthermore, we have already seen that $p_{\rm{succ}}^{\Theta}(\delta^{\star})\leq p_{\max}$. Hence, we arrive at
\begin{equation}
p_{\rm{succ}}^{\Theta}(\rho) \leq (1 + R_{C_{k}}(\rho))k p_{\max}
\end{equation}
which can be rearranged to Eq.~(14) in the main text.

When one considers the phase discrimination task $\tilde{\Theta}$ with a probe prepared in the noisy maximally coherent state $\rho(p)$ and optimised generalised measurements $\tilde{M}_{m} = \mathcal{U}_{\phi_{m}}(\ket{\psi_{d}^{+}}\bra{\psi_{d}^{+}})$, the success probability is~\cite{Napoli2016,Piani2016}
\begin{equation}
p_{\rm succ}^{\tilde{\Theta}}(\rho(p)) = \frac{1+R_{C_{1}}(\rho(p))}{d} = \frac{1+p(d-1)}{d}.
\end{equation}
This can be input into the lower bound to $R_{C_{k}}(p(\rho))$ in Eq.~(14) in the main text, for which we see that
\begin{equation}
\max \left\lbrace 0, \frac{1+p(d-1)}{k}-1 \right\rbrace \leq R_{C_{k}}(\rho(p)).
\end{equation}
In fact, it can be seen from Eq.~\eqref{Eq:RCkNoisyMaximallyCoherent} that this lower bound is tight.

\subsection{Experimental Imperfections}
Recall, that the phase discrimination task witnesses the robustness of multilevel coherence in a quasi-device independent way, relying only on the mild assumption that the device leaves incoherent states unperturbed. In practice, this assumption is not exactly satisfied, since experimental imperfections in general lead to unitaries that do not leave incoherent states exectly invariant. To take this into account, we replacing the upper bound $p^\Theta_{\max}$ on the incoherent success probability with an upper bound $p_{\rm{succ}}^{\Theta}(\delta)$ for the probability of success for discriminating the elements of the ensemble $\{p_m,\Lambda_m[\delta]\}$, where by $\Lambda_m[\delta]$ we denote the image of a incoherent state $\delta$ under the action of a map $\Lambda_m$ which describes the approximate application of a phase $\phi_m$.

In general, given an ensemble $\{p_m,\rho_m\}_{m=1}^n$ of $n$ possible states in which a system may be prepared, the optimal probability of guessing the actual state is given by $\min\{p|\rho_{AB}\leq p \I_A\otimes\sigma_B, \sigma\textrm{ a normalized state}\}$, with the classical-quantum state $\rho_{AB}=\sum_m p_m\proj{m}\otimes\rho_m$~\cite{Konig2009}. In our case $\rho_m=\Lambda_m[\delta]$. We will assume that the maps $\Lambda_m$ do not modify an incoherent state $\delta$ too much; more precisely, in terms of trace distance $D(\Lambda_m[\delta],\delta):=\frac{1}{2}\|\Lambda_m[\delta]-\delta\|_1\leq \epsilon$, for all $m$. This means that $\Lambda_m[\delta] \leq  \delta + \Delta_m$, with $\Delta_m\geq 0$ and $\Tr(\Delta_m)\leq \epsilon$, which in turn implies that $\Lambda_m[\delta] \leq  \delta + \epsilon\I$. It is immediate to check that then $\sum_m p_m \proj{m}\otimes \Lambda_m[\delta] \leq \big(p_{\max}(1+\epsilon d)\big)\I\otimes\sigma$, with $\sigma = \frac{\delta + \epsilon\I}{1+\epsilon d}$ a normalized state. This proves that $p_{\rm{succ}}^{\Theta}(\delta)\leq p^\Theta_{\max}(1+\epsilon d)$. We can give a reasonable estimate of $\epsilon$ in terms of the process fidelity of the experiment, using the Fuchs-van de Graaf inequality $D(\xi_1,\xi_2) \leq \sqrt{1-F^2(\xi_1,\xi_2)}$, with $F(\xi_1,\xi_2)$ the fidelity between two states~\cite{Fuchs1999}. Thus, we arrive at the estimate $p_{\rm{succ}}^{\Theta}(\delta)\leq p_{\max}(1+d\sqrt{1-\mathcal{F}^2_p})$, which can be substituted in place of $p^{\Theta}_{\max}$ in Eq.~(14) in the main text. In the case of our experiment with process fidelity $\mathcal{F}_p\approx 0.9956$ and $d=4$, this means substituting $p^{\Theta}_{\max}$ with $\approx 1.375 \times p^{\Theta}_{\max}$.

\medskip

\twocolumngrid

\bibliographystyle{apsrevfixedwithtitles}
\bibliography{MultilevelCoherence}

\end{document}